\newtheorem{theorem}{Theorem}[section]
\newtheorem{assumption}{Assumption}
\newtheorem{lemma}[theorem]{Lemma}
\newtheorem{corollary}[theorem]{Corollary}
\newtheorem{definition}[theorem]{Definition}
\newtheorem{remark}[theorem]{Remark}
\def\prob{{\mathbb P}}
\def\root{{\o}}
\def\nat{{\mathds N}}
\def\hxi{\widehat{\xi}}
\def\errI{\eta^{\textup{\tiny I}}}
\def\errII{\eta^{\textup{\tiny II}}}
\def\hsigma{\widehat{\sigma}}
\newcommand{\bomega}[1]{\underline{\omega}^{#1}}
\newcommand{\balf}{\overline{\alpha}}
\newcommand{\bbet}{\overline{\beta}}
\newcommand{\bQ}{\overline{Q}}
\def\ind{{\mathbb I}}
\def\cX{\mathcal{X}}
\def\cS{\mathcal{S}}
\def\cM{\mathcal{M}}
\def\cG{\mathcal{G}}
\def\cE{\mathcal{E}}
\def\cV{\mathcal{V}}
\def\cD{\mathcal{D}}
\def\naturals{\mathbb{N}}
\def\taup{\tau_{\rm p}}
\def\taud{\tau_{\rm d}}
\def\pe{\prob_{\rm err}}
\def\brho{{\overline{\rho}}}
\def\eps{\epsilon}
\title{Subexponential convergence for information aggregation on regular trees}
\author{Yashodhan Kanoria and Andrea Montanari
\thanks{Department of Electrical Engineering and Department of Statistics, Stanford University.
 Supported by 3Com Corporation Stanford Graduate Fellowship, and NSF
 grants CCF-0743978 and CCF-0915145.}}
\begin{document}

\maketitle
\begin{abstract}
We consider the decentralized binary hypothesis testing problem
on trees of bounded degree and increasing depth.
For a regular tree of depth $t$ and branching factor $k\ge 2$,
we assume that the leaves have access to independent and identically
distributed noisy observations of the `state of the world' $s$.
Starting with the leaves, each node makes a decision in a finite alphabet
$\cM$, that it sends to its parent in the tree. Finally, the root decides between the two possible
states of the world based on the information it receives.

We prove that the error probability vanishes only subexponentially in
the number of available observations, under quite general hypotheses.
More precisely the case of binary messages, decay is subexponential
for any decision rule. For general (finite) message alphabet $\cM$,
decay is subexponential for `node-oblivious' decision rules,
that satisfy a mild irreducibility condition.
In the latter case, we propose a family of decision rules
with close-to-optimal asymptotic behavior.
\end{abstract}

\section{Introduction}

Let $G = (V,E)$ be a (possibly infinite) network rooted at node
$\root$. Assume that independent and identically distributed noisy
observations of an hidden random variable  $s\in\{0,1\}$  are
available at a subset $U\subseteq V$ of the vertices. Explicitly, each
$i\in U$ has access to a private signal $x_i\in\cX$ where
$\{x_i\}_{i\in U}$ are independent and identically distributed,
conditional on $s$. The `state of the world' $s$ is drawn from
a prior probability distribution $\pi = (\pi_0,\pi_1)$.
The objective is to aggregate information about $s$ at the root node
under communication constraints encoded by the network structure,
while minimizing the error probability at $\root$.

We ask the following question:

\begin{quote}
\emph{How much does the error probability at the
root node $\root$ increase due to these communication constraints?}
\end{quote}

In order to address this question,
 consider a sequence of information aggregation problems indexed by
$t$.
Information is revealed
in a subset of the vertices $U_t \subseteq V$. There are $t$ rounds in
which information aggregation occurs. In each round, a subset of the nodes in $V$ make `decisions' that are broadcasted to their neighbors.
In the initial round, nodes $i \in U_t$
with distance $d(\root,i)=t$ (with $d(\,\cdot\, ,\,\cdot\,)$ being the graph distance)
broadcast a decision $\sigma_i \in \cM$ to their neighbors, with $\cM$
a finite alphabet. In the next round,
nodes $i \in V$ with distance $d(\root,i)=t-1$ broadcast a decision $\sigma_i \in \cM$ to their neighbors. And so on, until the neighbors of $\root$ announce their decisions in round $t$. Finally, the root makes its decision. The  decision of any node $i$ is a function of decisions of $i$'s neighbors in earlier rounds, and, if $i\in U$,
on the private signal $x_i$ received by $i$.

Clearly, the root can possibly access only the private
information available at nodes $i\in V$ with $d(\root,i)\le t$
(with $d(\,\cdot\, ,\,\cdot\,)$ the graph distance). We can therefore
assume, without loss of generality, that $U_t\subseteq
\{i\in V:\, d(\root,i)\le t\}$.  It is convenient to think of $U_t$ as
the \emph{information horizon} at time $t$.

Consider first the case in which communication is unconstrained.
This can be modeled by considering the graph with vertices
$V=\{\root,1,2,3,\dots\}$ and edges $E= \{(\root,1),(\root,2),(\root,3),\dots\}$.
In other words, this is a star network, with the root at the
center. Without loss of generality, we
take $U_{t} = \{1,\dots,|U_t|\}$, with $|U_t|\uparrow \infty$
as $t\to\infty$.

A simple procedure for information aggregation
would work as follows. Each node $i$ computes the log-likelihood
ratio (LLR) $\ell(x_i)$ corresponding to the  observed signal, and quantizes
it to a value $\sigma_i$. The root adds up the quantized  LLRs and
decides on the basis of this sum.
It follows from basic large deviation theory \cite{DemboZeitouni}
that, under mild regularity assumptions, the error probability decreases
exponentially in the number of observations
\begin{eqnarray}
\prob\{\sigma_{\root} \neq s\} = \exp\big\{-\Theta(|U_t|)\}\, .
\end{eqnarray}
This result is extremely robust:

\noindent{\bf $(1)$}~It holds for any non-trivial
alphabet $|\cM|\ge 2$;

\noindent{\bf $(2)$}~Using concentration-of-measure
arguments \cite{Ledoux,Dubhashi} it is easy to generalize it
to families of weakly dependent observations
\cite{KanoriaMontanariWeakly};

\noindent{\bf $(3)$}~It can be generalized to network structures $G$ with weak
communications constrains. For instance, \cite{TTW_bounded_height}
proved that the error probability decays exponentially in the number
of observations for trees of bounded depth. The crucial observation
here is that such networks have large degree diverging with the number
of vertices. In particular, for a tree of depth $t$, the maximum
degree is at least $n^{1/t}$.

\vspace{0.1cm}

At the other extreme,  Hellmann and Cover \cite{HellmannCover} considered the case of a
line network. In our notations, we have $V = \{\root,1,2,3,\dots\}$,
$E=\{(\root,1),(1,2),(2,3),\dots\}$, and $U_t = \{1,2,\dots,t\}$.
In \cite{HellmannCover} they proved that, as long as the LLRs are
bounded (namely $|\ell(x_i)|\le C$ almost surely for some constant
$C$), and the decision rule is independent of the node,
 the error probability  remains bounded away from $0$ as
$t\to\infty$.

If the decision rule is allowed to depend on the node, the error
probability  can vanish as $t\to\infty$ provided $|\cM|\ge 3$
\cite{Cover,Koplowitz}. Despite this, even if the probability of error
decays to $0$, it does so much more slowly than for
highly connected networks. Namely, Tay, Tsitsiklis and
Win \cite{TTW_subexp_tandem} proved that
\begin{eqnarray}
\prob\{\sigma_{\root}\neq s\} = \exp\big\{-O(|U_t|^{\rho})\big\}\label{eq:Subexp}
\end{eqnarray}
for some $\rho<1$.
 In other words, the communication constraint is so severe that,
after $t$ steps, the amount of information effectively used by
the root  is equivalent to a vanishingly small fraction of the one within
the `information horizon'.

These limit cases naturally lead to the general question: Given a rooted network
$(G,\root)$, a sequence of information horizons $\{U_t\}_{t\ge 1}$
and a finite alphabet $\cM$, can information be aggregated at the root
in such a way that the error probability decays exponentially in
$|U_t|$? The question is wide open, in particular for networks of
with average degree bounded or increasing slowly
(e.g. logarithmically) with the system size.

Networks with moderate degree arise in a number of practical
situations. Within decentralized detection applications,  moderate degree
is a natural assumption for interference-limited wireless networks.
In particular, systems in which a single root node communicates with a
significant fraction of the sensors are likely to scale poorly because
of interference at the root. Standard models for wireless ad hoc
networks \cite{GuptaKumar} are indeed based on random geometric graphs
whereby each node is connected to a logarithmic number of neighbors.

A different domain of applications for models of decentralized
decision making is social learning \cite{ADLOzdaglar}. In this case, each node
corresponds to  an agent, and the
underlying graph is the social network across which
information is exchanged. Also in this case, it is reasonable to
assume that each agent has a number of neighbors which is bounded,
or diverges slowly as the total number of agents grows.
In many graph-theoretic models of social networks \cite{Newman},
although a small number of  nodes can have large degree,
the average degree is bounded or grows
logarithmically with the network size.

Given the slow progress with extreme network structures (line networks
and highly-connected networks), the study of general moderate degree
networks appears extremely challenging. In this paper we
focus on regular trees. More precisely, we let $G$ be the (infinite)
regular tree with branching factor $k$, rooted at $\root$
(each node has $k$ descendants and, with the exception of the root,
one parent). The information horizon  $U_t$ is formed by all the nodes
at distance $t$ from the root, hence $|U_t|=k^t$.
Under a broad set of assumptions, we prove that the probability of
error decays subexponentially in the size of the information set,
cf. Eq.~(\ref{eq:Subexp}), where $\rho=\rho_\cM<1$ depends on the
size of the alphabet $|\cM|=m$.

More precisely, we establish subexponential convergence in the following cases:
\begin{enumerate}
\item For binary messages $|\cM|=2$ and any choice of the decision
  rule. In fact, we obtain a precise characterization of the smallest possible
  error probability in this case.
\item For general message alphabet $3\le |\cM|<\infty$ provided
the decision rule does not depend on the node, and satisfies a mild
`irreducibility' condition (see Section \ref{subsec:subexp_decay_general} for a definition).
\end{enumerate}

In the latter case, one expects that exponential convergence is
recovered as the message set gets large. Indeed we prove that
the optimal exponent in Eq.~(\ref{eq:Subexp}) obeys
\begin{eqnarray}
 1-\frac{C_1}{|\cM|}\le\rho_{\cM} \le 1-\exp\big\{-C_2|\cM|\big\} \,
 .\label{eq:LargeAlphabet}
\end{eqnarray}
The upper bound follows from our general proof for irreducible
decision rules, while the lower bound is obtained by
 constructing an explicit decision rule that achieves it.

Our investigation leaves several interesting open problems.
First, it would be interesting to
compute the optimal exponent $\rho = \rho(k,\cM)$ for given degree of
the tree and size of the alphabet. Even the behavior of the exponent
for large alphabet sizes is unknown at the moment (cf. Eq.~\eqref{eq:LargeAlphabet}).
Second, the question of characterizing the performance limits of general, node-dependent decision rules remains open for $|\cM|\geq 3$.
Third, it would be interesting to understand the case where non-leaf nodes also get
private signals, e.g., $U_t = \{i: i \in V, d(\root, i)\leq t\}$.
Finally, this paper focuses on tree of bounded degree. It
would be important to explore generalization to other graph
structures, namely trees with slowly diverging degrees (which could be
natural models for the local structure of preferential attachment
graphs \cite{SaberiPA}),  and loopy graphs. Our current results
can be extended to trees of diverging degree only in the case of
binary signals. In this case we obtain that  the probability of error
is subexponential
\begin{eqnarray}
\prob\{\sigma_{\root}\neq s\} = \exp\big\{-o(|U_t|)\}
\end{eqnarray}
as soon as the degree is sub-polynomial, i.e.  $k= o(n^{a})$ for all $a>0$.

\vspace{0.1cm}

The rest of the paper is organized as follows:
Section \ref{sec:Model} defines formally the model for information
aggregation. Section \ref{sec:Binary} presents our results for binary
messages $|\cM|=2$. Section \ref{sec:NodeOblivious}
treats the case of decision rules that do not depend on the node, with general $\cM$.

\section{Model Definition}
\label{sec:Model}

As mentioned in the introduction, we assume the network
$G=(V,E)$ to be an (infinite) rooted  $k$-ary tree, i.e. a tree
whereby each node has $k$ descendants and one parent (with the
exception of the root, that has no parent).
Independent noisy observations (`private signals')
of the state of the world $s$
are provided to the nodes at all the nodes at $t$-th generation
$U_t= \{i\in V\,:\, d(\root,i) = t\}$. These will be also referred to
as the `leaves'. Define $n \equiv |U_t| = k^t$.
Formally, the state of the world $s\in \{0,1\}$ is drawn according to
the prior $\pi$ and for each $i\in U_t$ an independent
observation $x_i \in \cX$ is drawn with probability distribution
$p_0(\,\cdot\,)$ (if $s=0$) or $p_1(\,\cdot\,)$ (if $s=1$).
For notational simplicity we assume that $\cX$ is finite, and that
$p_0(x)$, $p_1(x)>0$ for all $x\in\cX$. Also, we exclude degenerate
cases by taking $\pi_0,\pi_1>0$.
We refer to the refer to the two events $\{s=0\}$ and $\{s=1\}$ as the
hypotheses $H_0$ and $H_1$. 

In round 0, each leaf $i$ sends a message $\sigma_i \in \cM$ to its parent at level 1. In round 1, the each node $j$ at level 1 sends a message $\sigma_j \in \cM$ to its parent at level 2.
Similarly up to round $t$. Finally, the root node $\root$ makes a decision $\sigma_\root \in \{0,1\}$ based on the $k$ messages it receives. The objective is to minimize $\pe \equiv \prob(\sigma_\root \neq s)$. We call a set of decision rules \emph{optimal} if it minimizes $\pe$.

We will denote by $\partial i$ the set of children of node $i$. We denote the probability of events under $H_0$ by
$\prob_0(\cdot)$, and the probability of events under $H_1$ by
$\prob_1(\cdot)$.
 Finally, we denote by $f_i$ the decision rule at node $i$ in the tree.  If $i$ is not a leaf node and $i\neq \root$, then $f_i: \cM^k \rightarrow \cM$. The root makes a binary decision $f_\root: \cM^k \rightarrow \{0,1\}$. If $i$ is a leaf node, it maps its private signal to a message, $f_i: \cX \rightarrow \cM$.  In general, $f_i$'s can be randomized.

\section{Binary messages}
\label{sec:Binary}

In this section, we consider the case $\cM=\{0,1\}$, i.e., the case of binary messages.

Consider the case $\pi_0=\pi_1=1/2$, $\cX=\{0,1\}$ and $p_s(x) = (1-\delta) \ind(x=s) + \delta \ind(x\neq s)$ for $s=0, 1$; where  $\delta \in (0,1/2)$. Define the majority decision rule  at non-leaf node $i$ as follows: $\sigma_i$ takes the value of the majority of $\sigma_{\partial i}$ (ties are broken uniformly at random). 

It is not hard to see that if we implement majority updates at all non-leaf nodes, we achieve
\begin{align}
\prob_{\rm maj}(\sigma_\root \neq s) = \exp\left \{ -\Omega\left( \lfloor (k+1)/2 \rfloor^t  \right) \right\}
\end{align}
Note that this is an upper bound on error probability under majority updates.

Our main result shows that, in fact, this is essentially the best that can be achieved.
\begin{theorem}
\label{thm:binary_lower_bound}
Fix the private signal distribution, i.e., fix $p_0(\cdot)$ and $p_1(\cdot)$. There exists $C< \infty$ such that for all $k \in \nat$ and $t \in \nat$, for any combination of decision rules at the nodes, we have
\begin{align}
\prob(\sigma_\root \neq s) \geq  \exp\left \{ -C \left ( \frac{k+1}{2}   \right)^t \right\}
\end{align}
\end{theorem}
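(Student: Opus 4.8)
The plan is to show that for \emph{any} node $i$ at level $j$ in the tree, conditioned on $H_0$ (say), the message $\sigma_i$ still carries a bit of $H_1$-probability that is bounded below by roughly $\exp\{-C((k+1)/2)^j\}$. Concretely, I would define for each non-leaf node $i$ the quantity $q_i \equiv \min\{\prob_0(\sigma_i = 1),\, \prob_1(\sigma_i = 0)\}$ — the worst-case ``wrong message'' probability at $i$ — and prove by induction on the level $j$ of $i$ that $q_i \geq \exp\{-C((k+1)/2)^j\}$ for a suitable constant $C$ depending only on $p_0,p_1$. Applying this at the root (whose decision is a function of the $k$ messages from level-$1$ nodes, hence itself just a binary-valued ``message'') gives $\prob(\sigma_\root \neq s) \geq \min\{\pi_0 \prob_0(\sigma_\root=1), \pi_1\prob_1(\sigma_\root = 0)\} \geq \pi_{\min}\, q_\root \geq \exp\{-C'((k+1)/2)^t\}$, which is the claim.

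The base case $j=0$ (leaves) is where the private-signal distribution enters: since $\cX$ is finite and $p_0(x),p_1(x)>0$, any decision rule $f_i:\cX\to\{0,1\}$ (even randomized) has $q_i$ bounded below by a constant $q_0 = q_0(p_0,p_1) > 0$ — roughly, the leaf cannot send a deterministically ``correct'' message because both signal values occur under both hypotheses. For the inductive step, consider a non-leaf node $i$ at level $j$ with children $i_1,\dots,i_k$ at level $j-1$, each satisfying $q_{i_\ell} \geq q^* := \exp\{-C((k+1)/2)^{j-1}\}$. The key point: the $k$ messages $\sigma_{i_1},\dots,\sigma_{i_k}$ are conditionally independent given $s$ (they live in disjoint subtrees with disjoint leaf sets). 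I want to lower bound, over all rules $f_i:\{0,1\}^k\to\{0,1\}$, the quantity $\min\{\prob_0(f_i(\sigma_{\partial i})=1),\prob_1(f_i(\sigma_{\partial i})=0)\}$. The adversary picking $f_i$ is trying to separate the product distribution of $\sigma_{\partial i}$ under $H_0$ from that under $H_1$; the best it can do is threshold on the likelihood ratio, and the resulting error is controlled by a reverse-Pinsker / Bhattacharyya-type bound: if each coordinate's two conditional laws have overlap at least $q^*$, the product laws over $k$ coordinates still have ``affinity'' at least $(q^*)^{\Theta(k)}$-ish, but we need the sharper counting that only $\lceil (k-1)/2\rceil$ or so coordinates need to be ``wrong'' to flip a nontrivial decision. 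This is exactly where the $\lfloor(k+1)/2\rfloor$ (equivalently $(k+1)/2$) appears: the most economical way to force $f_i$ to err is to have a bare majority of children send the wrong message, costing $(q^*)^{\lceil(k+1)/2\rceil}$ up to a binomial factor, giving $q_i \geq c_k (q^*)^{(k+1)/2} = c_k \exp\{-C\frac{k+1}{2}((k+1)/2)^{j-1}\} = c_k\exp\{-C((k+1)/2)^j\}$, and choosing $C$ large enough (relative to $\log(1/q_0)$ and $\log(1/c_k)$, which must be absorbed uniformly in $k$) closes the induction.

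The main obstacle is making the inductive step's constant $C$ \emph{uniform in $k$}, since the theorem demands a single $C$ working for all $k$ and $t$ simultaneously. The naive binomial prefactor $c_k = \binom{k}{\lceil(k+1)/2\rceil}^{-1}$ is exponentially small in $k$, which at first glance spoils uniformity; the resolution is that $\log(1/c_k) = O(k)$ while the quantity we are building up, $((k+1)/2)^j$, grows geometrically in $j$ — so one needs the induction hypothesis to carry a little slack (e.g. prove $q_i \geq \exp\{-C((k+1)/2)^j + D\cdot k\cdot(\text{something})\}$, or simply observe $((k+1)/2)^j \geq ((k+1)/2)^{j-1} + \Theta(k)$ when $j\geq 1$ so the $O(k)$ losses are dominated). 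I would set up the bookkeeping carefully so that the per-level multiplicative loss $c_k$ and the base constant $q_0$ are swept into a single additive reserve in the exponent, verify the reserve never runs out because the geometric growth of $((k+1)/2)^j$ outpaces the linear-in-$k$, linear-in-$j$ debits, and thereby extract a $k$-independent (and $t$-independent) constant $C$. A secondary, more technical point is handling randomized decision rules — but randomization only convexifies the set of achievable $(\prob_0(\sigma_i=1),\prob_1(\sigma_i=0))$ pairs, and since $q_i$ is a concave function (a min of two linear functionals) the bound for deterministic rules extends immediately.
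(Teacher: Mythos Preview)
Your inductive invariant is too weak to carry the $(k+1)/2$ exponent, and the step where you claim it does is where the argument breaks. You track $q_i=\min\{\prob_0(\sigma_i=1),\prob_1(\sigma_i=0)\}$ and assert that if each child has $q_{i_\ell}\ge q^*$ then any rule at the parent yields $q_i\gtrsim (q^*)^{(k+1)/2}$. But the hypothesis $q_{i_\ell}\ge q^*$ does not forbid the children from having \emph{both} error probabilities equal to $q^*$; and when that is the case the AND rule $\sigma_i=\bigwedge_\ell \sigma_{i_\ell}$ gives type-I error $\prod_\ell \alpha_\ell=(q^*)^k$ while the type-II error stays $\Theta(q^*)$. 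Hence $q_i$ can drop to $(q^*)^k$, not $(q^*)^{(k+1)/2}$, and iterating only recovers the trivial bound $q_\root\ge\exp\{-C k^t\}=\exp\{-Cn\}$. Your ``bare majority'' heuristic is an \emph{upper} bound on the error of the majority rule; it is not a lower bound valid for every $f_i$, and the extreme LRT thresholds (AND/OR) beat it on one side. Bhattacharyya/affinity bounds have the same defect: they tensorize with exponent $k$, not $(k+1)/2$.

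What is missing is an invariant that records the \emph{trade-off} between the two errors rather than just their minimum. The paper does this by tracking the product of the log-errors, $\errI_i\,\errII_i$ where $\errI_i=-\log\prob_0(\sigma_i=1)$ and $\errII_i=-\log\prob_1(\sigma_i=0)$, and proving it grows by at most $((k+1)/2)^2$ per level. The mechanism is: after reducing to monotone deterministic LRTs and ordering the children so that $\errI_{c_1}\ge\cdots\ge\errI_{c_k}$ and $\errII_{c_1}\le\cdots\le\errII_{c_k}$, any LRT at $i$ has a pivot index $j_0\in\{1,\dots,k\}$ with $\errI_i\le (k-j_0+1)\errI_{c_{j_0}}$ and $\errII_i\le j_0\,\errII_{c_{j_0}}$; then AM--GM on $(k-j_0+1)+j_0=k+1$ gives $\errI_i\errII_i\le((k+1)/2)^2\,\errI_{c_{j_0}}\errII_{c_{j_0}}$. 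This invariant absorbs the AND/OR extremes automatically (they correspond to $j_0\in\{1,k\}$, where one factor is $k$ and the other is $1$, product $k\le((k+1)/2)^2$), and it delivers a $k$-uniform constant with no binomial debris to manage.
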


In particular, the error probability decays subexponentially in the number of private signals
$n = k^t$, even with the optimal protocol. 

\subsection{Proof of Theorem \ref{thm:binary_lower_bound}}
 We prove the theorem for the case $\pi_0=\pi_1=1/2$, $\cX=\{0,1\}$ and $p_s(x) = (1-\delta) \ind(x=s) + \delta \ind(x\neq s)$ for $s=0, 1$; where $\delta \in (0,1/2)$. The proof easily generalizes to arbitrary $\pi, \cX, p_0$ and $p_1$.

Also, without loss of generality we can assume that, for every node $i$,
\begin{align}
\label{eq:LRorder}
\frac{\prob(s=1|\sigma_i=1)}{\prob(s=0|\sigma_i=1)} \geq
\frac{\prob(s=1|\sigma_i=0)}{\prob(s=0|\sigma_i=0)}
\end{align}
(otherwise simply exchange the symbols and modify the decision rules accordingly).

Denote by $\errI_i$ the (negative) logarithm of the `type I error' in $\sigma_i$, i.e. $\errI_i\equiv -\log(\prob(s=0, \sigma_i=1))$.
Denote by $\errII_i$ the (negative) logarithm of the `type II error' in $\sigma_i$, i.e. $\errII_i\equiv - \log(\prob(s=1, \sigma_i=0))$.

The following is the key lemma in our proof of Theorem \ref{thm:binary_lower_bound}.

\begin{lemma}
\label{lemma:binary_key_lemma}
Given $\delta>0$, there exists $C\equiv C(\delta)>0$ such that for any $k$ we have the following: There exists an optimal set of decision rules such that for any node $i$ at level $\tau \in \nat$,
\begin{align}
\label{eq:product_errIandII}
\errI_i \errII_i \leq C^2 ((k+1)/2)^{2\tau} \, .
\end{align}
\end{lemma}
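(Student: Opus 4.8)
The plan is to prove Lemma \ref{lemma:binary_key_lemma} by induction on the level $\tau$, propagating the bound on the product $\errI_i\errII_i$ from the children of a node to the node itself. At a leaf $i$ (level $0$), the message $\sigma_i$ is a function of the single private signal $x_i\in\{0,1\}$, so both error exponents are $O(1)$ (bounded in terms of $\delta$ only), giving the base case $\errI_i\errII_i\le C^2$. For the inductive step, consider a non-leaf node $i$ at level $\tau$ with children at level $\tau-1$; by the induction hypothesis we may assume each child $j$ has $\errI_j\errII_j\le C^2((k+1)/2)^{2(\tau-1)}$. The key structural observation is that, modulo the normalization \eqref{eq:LRorder}, a \emph{monotone} (threshold) rule is optimal at each node, so we can restrict to rules of the form ``$\sigma_i=1$ iff at least $r$ of the $k$ children report $1$'' for some threshold $r=r_i$. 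I would first argue this monotonicity reduction, using the likelihood-ratio ordering in \eqref{eq:LRorder} together with a standard exchange/rearrangement argument showing that any optimal rule can be replaced by a monotone one without increasing $\pe$.

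Given the threshold structure, I would estimate $\errI_i$ and $\errII_i$ directly. Under $H_0$ each child reports $1$ with some small probability $q_j^{\rm I}=\prob_0(\sigma_j=1)$, and $\prob_0(\sigma_i=1)$ is the probability that at least $r$ independent such events occur; the dominant contribution comes from the $r$ ``cheapest'' children, so roughly $\errI_i\gtrsim \sum$ of the $r$ smallest $\errI_j$ values, and symmetrically $\errII_i\gtrsim\sum$ of the $(k-r+1)$ smallest $\errII_j$ values. Writing $a_j=\errI_j$, $b_j=\errII_j$ with $a_jb_j\le B^2:=C^2((k+1)/2)^{2(\tau-1)}$, the product $\errI_i\errII_i$ is then controlled by a quantity like $\bigl(\sum_{j\in S}a_j\bigr)\bigl(\sum_{j\in S'}b_j\bigr)$ where $|S|=r$, $|S'|=k-r+1$, and $S,S'$ are chosen adversarially (smallest $a$'s, smallest $b$'s). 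The heart of the matter is the elementary but slightly delicate optimization: maximize this product over all nonnegative $(a_j,b_j)$ with $a_jb_j\le B^2$ and over the threshold $r$. One checks that the worst case is essentially balanced — all $a_j\approx a$, all $b_j\approx B^2/a$ — giving $\errI_i\errII_i\lesssim r(k-r+1)B^2\le \lceil (k+1)/2\rceil^2 B^2$, and choosing $C$ large enough absorbs the $O(1)$ slack from the ``dominant term'' approximation and from tie-breaking, closing the induction with the factor $((k+1)/2)^2$.

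The main obstacle, which I would treat carefully rather than wave through, is making the ``dominant contribution from the cheapest children'' step rigorous as a genuine \emph{lower} bound on $\errI_i$ and $\errII_i$ of the right form, \emph{not} merely an upper bound. For the lower bound one wants: if $\sigma_i=1$ requires at least $r$ children reporting $1$, then $\prob_0(s=0,\sigma_i=1)\le \sum$ over $r$-subsets $T$ of $\prod_{j\in T}\prob_0(\sigma_j=1)$, hence $\errI_i\ge \min_{|T|=r}\sum_{j\in T}\errI_j - O(\log\binom{k}{r})$; the binomial correction is $O(k)$, which is why the bound in the lemma has the specific exponential-in-$\tau$ form (the $O(k)$ additive slack per level is dominated once we are one level up, since $((k+1)/2)^{2\tau}$ grows geometrically). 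I would need to verify that this additive $O(k)$ loss is compatible with multiplying by $((k+1)/2)^2$ each level, i.e. that $r(k-r+1)B^2 + O(k)\cdot(\text{stuff}) \le C^2((k+1)/2)^{2\tau}$ for a uniform $C$; since $B^2=C^2((k+1)/2)^{2(\tau-1)}$ and $r(k-r+1)\le\lceil(k+1)/2\rceil^2$, there is room as long as the cross terms are handled, e.g. by the inequality $(A+u)(D+v)\le (1+\epsilon)AD$ when $u,v$ are lower-order, or by slightly strengthening the inductive hypothesis to carry separate additive terms. Finally, I would note that once Lemma \ref{lemma:binary_key_lemma} holds, Theorem \ref{thm:binary_lower_bound} follows immediately: applying it at level $\tau=t$ to the children of the root (or to the root's own binary decision, handled identically), at least one of $\errI_\root,\errII_\root$ is at most $C((k+1)/2)^t$, hence $\pe\ge\tfrac12\min(\prob_0(\sigma_\root=1),\prob_1(\sigma_\root=0))\ge\exp\{-C'((k+1)/2)^t\}$.
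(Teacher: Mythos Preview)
Your proposal has the right skeleton (induction on level, monotone LRTs, multiplicative gain $((k+1)/2)^2$ per level, then AM--GM), but there are two genuine gaps that would prevent the argument from closing.

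\textbf{First, the reduction to symmetric threshold rules is not valid.} You assert that an optimal monotone rule at $i$ takes the form ``$\sigma_i=1$ iff at least $r$ children report $1$''. That is only true when the children are exchangeable, i.e.\ have identical $(\errI_j,\errII_j)$. For heterogeneous children (which you explicitly allow, writing different $a_j,b_j$), a monotone LRT weights children by their log-likelihood ratios and is generally \emph{not} a symmetric count threshold. The paper does not make this reduction. Instead it (i) orders the children so that $\errI_{c_1}\ge\cdots\ge\errI_{c_k}$, (ii) argues that without loss of generality $\errII_{c_1}\le\cdots\le\errII_{c_k}$ (by copying a dominating subtree onto a dominated one), and then (iii) observes that along the specific chain of vectors $\bomega{0}=(1\cdots1),\bomega{1}=(01\cdots1),\ldots,\bomega{k}=(0\cdots0)$ the likelihood ratio is monotone, so any monotone LRT has a single threshold $j_0$ on this chain. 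No symmetric-threshold assumption is needed.

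\textbf{Second, and more seriously, your careful ``main obstacle'' paragraph proves the inequality in the wrong direction.} The lemma asks for an \emph{upper} bound on $\errI_i\errII_i$, i.e.\ a \emph{lower} bound on the error probabilities. You write a union bound $\prob_0(s=0,\sigma_i=1)\le\sum_{|T|=r}\prod_{j\in T}\prob_0(\sigma_j=1)$, giving $\errI_i\ge\min_T\sum_{j\in T}\errI_j-O(\log\binom{k}{r})$, and then worry about controlling the $O(k)$ binomial correction. But a lower bound on $\errI_i$ is useless here. What you need is $\errI_i\le(\text{something})$, and this is actually the \emph{easy} direction: exhibit one configuration $\sigma_{\partial i}$ with $f_i(\sigma_{\partial i})=1$ and bound $\prob_0(s=0,\sigma_i=1)\ge\prob_0(s=0,\sigma_{\partial i}=\cdot)$. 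No union bound, no combinatorial correction. In the paper, the configurations $\bomega{j_0-1}$ (mapped to $1$) and $\bomega{j_0}$ (mapped to $0$) give directly
\[
\errI_i\le\sum_{j=j_0}^k\errI_{c_j}\le(k-j_0+1)\,\errI_{c_{j_0}},\qquad
\errII_i\le\sum_{j=1}^{j_0}\errII_{c_j}\le j_0\,\errII_{c_{j_0}},
\]
using the opposite orderings in (ii). Adding the ratios gives $\errI_i/\errI_{c_{j_0}}+\errII_i/\errII_{c_{j_0}}\le k+1$, and AM--GM yields $\errI_i\errII_i\le((k+1)/2)^2\,\errI_{c_{j_0}}\errII_{c_{j_0}}$ with both exponents tied to the \emph{same} child $c_{j_0}$, closing the induction cleanly. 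Your optimization over arbitrary $(a_j,b_j)$ with $a_jb_j\le B^2$ never arises, because the two bounds automatically reference a common child.
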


\begin{proof}[Proof of Theorem \ref{thm:binary_lower_bound}]
Applying Lemma \ref{lemma:binary_key_lemma} to the root $\root$, we
see that $\min(\errI_\root, \errII_\root) \leq C((k+1)/2)^t$.
The result follows immediately.
\end{proof}

Lemma \ref{lemma:binary_key_lemma} is proved using the fact that there is an optimal
set of decision rules that correspond to deterministic likelihood ratio tests (LRTs)
at the non-leaf nodes.

\begin{definition}
Choose a node $i$. Fix the decision functions of all descendants of $i$. Define
$L_i(\sigma_{\partial i}) = \prob(H_1|\sigma_{\partial i}) / \prob(H_0|\sigma_{\partial i})$. \\ 
a) The decision function  $f_i$ is a \emph{monotone deterministic likelihood ratio test} if:\\
(i) It is deterministic.\\
(ii) There is a threshold $\theta$ such that
\begin{align*}
\prob(f_i=1, L_i<\theta)=0\\
\prob(f_i=0, L_i>\theta)=0
\end{align*}

b) The decision function  $f_i$ is a \emph{deterministic likelihood ratio test} if either $f_i$ or $f_i^{\rm c}$ is a monotone deterministic likelihood ratio test. Here $f_i^{\rm c}$ is the Boolean complement of $f_i$.
\end{definition}

The next lemma is an easy consequence of a beautiful result of Tsitsiklis \cite{T_LRQ_extremal}. Though we state it here only for binary message alphabet, it easily generalizes to arbitrary finite $\cM$.
\begin{lemma}
\label{lemma:LRToptimal}
There is a set of monotone deterministic likelihood ratio tests at the nodes that achieve the
minimum possible $\prob(\sigma_\root \neq s)$.
\end{lemma}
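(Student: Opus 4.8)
The plan is to derive Lemma~\ref{lemma:LRToptimal} from the extremal property of likelihood-ratio quantizers of Tsitsiklis~\cite{T_LRQ_extremal} together with a symbol-relabeling argument propagated from the leaves toward the root.

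\textbf{Step 1 (existence and derandomization).} Only the rules at the finitely many nodes within distance $t$ of $\root$ affect $\pe$, and $\pe$ is a continuous (in fact multilinear) function of the corresponding local transition kernels, which range over a compact product of simplices; hence an optimal profile exists. Fixing every rule but the one at a node $i$, the map from $i$'s randomization to $\pe$ is \emph{affine}, so it is minimized at a deterministic rule; replacing nodes one at a time yields an optimal \emph{deterministic} profile without increasing $\pe$.

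\textbf{Step 2 (each rule a deterministic LRT).} This is where the tree structure enters. Fix all rules but that of a node $i$ and let $Y_i$ be its input ($x_i$ if $i$ is a leaf, $\sigma_{\partial i}$ otherwise). The subtree rooted at $i$ and the rest of the network draw on disjoint families of the conditionally i.i.d.\ private signals, so $Y_i$ is conditionally independent, given $s$, of the vector $Z$ of all messages generated off the path from $i$ to $\root$; moreover $\sigma_\root=g(\sigma_i,Z)$ for a fixed deterministic $g$. Conditioning on $(s,Z)$ and writing $a_s\equiv\prob_s(f_i(Y_i)=1)$, one finds $\pe=c_0a_0+c_1a_1+\mathrm{const}$, with $c_0,c_1$ depending only on the other rules; minimizing over $f_i$ places an input $y$ in $\{f_i=1\}$ precisely when $c_1\prob_1(y)+c_0\prob_0(y)\le 0$, and since $\prob_0(y)>0$ this region is a sub- or super-level set of $L_i(y)=\prob_1(y)/\prob_0(y)$ (ties broken either way). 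So the best $f_i$ given the others is a deterministic LRT --- exactly the quantizer-extremality statement of \cite{T_LRQ_extremal}, whose $|\cM|\ge 3$ version simply replaces the level set by a partition of the input into likelihood-ratio intervals. We then perform this replacement at every node in order of increasing distance from the leaves: each replacement preserves optimality (in an optimal profile $f_i$ is already a best response) and does not undo earlier ones, since whether a rule $f_j$ is a deterministic LRT depends only on the rules strictly below $j$, which are never altered after $j$ is processed.

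\textbf{Step 3 (monotonicity) and the main difficulty.} Every rule is now a deterministic LRT, possibly the \emph{anti}-monotone kind (so that $f_i^{\rm c}$, not $f_i$, satisfies the threshold condition). Sweeping the nodes once more from the leaves up, whenever $f_i$ is anti-monotone we relabel the two symbols of $\sigma_i$: replace $f_i$ by $f_i^{\rm c}$ and replace the rule at the parent $p(i)$ by its composition with that relabeling in the $i$-th coordinate. This is a bijective recoding of $\sigma_i$, so $\pe$ is unchanged; $f_i^{\rm c}$ is monotone by definition; and $f_{p(i)}$ remains a deterministic LRT, since permuting one coordinate of its input permutes the values of $L_{p(i)}$ in the same way it permutes $f_{p(i)}$. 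Ancestor relabelings never touch $f_i$, so monotonicity is retained, and the optimal root rule is the MAP test $L_\root>1$, already monotone; the final profile consists of monotone deterministic LRTs and is optimal. I expect the only genuinely substantive point to be the conditional-independence-across-a-cut fact in Step~2 --- the place where acyclicity of the tree is used and Tsitsiklis's result becomes applicable; Steps~1 and~3 are bookkeeping, the one delicate issue there being that the leaves-to-root processing order keeps the successive modifications from interfering.
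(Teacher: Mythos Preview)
Your proof is correct and follows essentially the same route as the paper's: fix all rules but one, observe that $\pe$ is affine in that node's distribution $(\prob_0(\sigma_i),\prob_1(\sigma_i))$ and hence is minimized by a deterministic LRT (the paper cites \cite{T_LRQ_extremal} for this, you re-derive it from the affine form $c_0a_0+c_1a_1+\text{const}$), then sweep from the leaves to the root, complementing anti-monotone rules and compensating at the parent. Your write-up is more explicit than the paper's about existence, about the conditional-independence-across-the-cut that makes $\pe$ affine in a single node's rule, and about why the leaves-to-root order prevents later replacements from undoing earlier ones; these are exactly the points the paper leaves implicit.
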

\begin{proof}
Consider a set of decision rules that minimize $\prob(\sigma_\root \neq s)$.

Fix the rule at every node except node $i$ to the optimal one. Now, the distributions $\prob_0(\sigma_{\partial i})$ and $\prob_1(\sigma_{\partial i})$ are fixed. Moreover,
$\prob(\sigma_\root \neq s)$ is a linear function of $q(f_i)\equiv(\prob_0(\sigma_i), \prob_1(\sigma_i))$, where $\prob_s(\sigma_i)$ denotes the distribution of $\sigma_i$ under hypothesis $H_s$. The set $\bQ$ of achievable $q$'s is clearly convex, since randomized $f_i$ is allowed. From \cite[Proposition 3.1]{T_LRQ_extremal}, we also know that $\bQ$ is compact.
Thus, there exists an extreme point of $\bQ$ that minimizes $\prob(\sigma_\root \neq s)$.
Now \cite[Proposition 3.2]{T_LRQ_extremal} tells us that any extreme point of $\bQ$ can be achieved by a deterministic LRT. Thus, we can change $f_i$ to a deterministic LRT without increasing $\prob(\sigma_\root \neq s)$. If $f_i$ is not monotone (we know that $i\neq \root$ in this case), then we do $f_i\leftarrow f_i^{\rm c}$ and $f_j(\sigma_i, \sigma_{\partial j \backslash i}) \leftarrow f_j(\sigma_i^{\rm c}, \sigma_{\partial j \backslash i})$.  Clearly, $\prob(\sigma_\root \neq s)$ is unaffected by this transformation, and $f_i$ is now a monotone rule.

We do this at each of the nodes sequentially, starting at level $0$, then covering level $1$ and so on until the root $\root$. 
Thus, we change (if required) each decision rule to a monotone deterministic LRT without increasing $\prob(\sigma_\root \neq s)$. The result follows.
\end{proof}

Clearly, if $f_i$ is a monotone LRT, Eq.~\eqref{eq:LRorder} holds. In fact, we argue that there is a set of deterministic monotone LRTs with strict inequality in Eq.~\eqref{eq:LRorder}, i.e., such that
\begin{align}
\label{eq:LRorderstrict}
\frac{\prob(s=1|\sigma_i=1)}{\prob(s=0|\sigma_i=1)} >
\frac{\prob(s=1|\sigma_i=0)}{\prob(s=0|\sigma_i=0)}
\end{align}
holds for all $i$, that are optimal.

Eq.~\eqref{eq:LRorder} can only be written when $\prob(\sigma_i=0)>0$ and $\prob(\sigma_i=1)>0$. Consider a leaf node $i$. Without loss of generality we can take $\sigma_i = x_i$ for each leaf node $i$ (since any other rule can be `simulated' by the concerned level 1 node). So we have $\prob(\sigma_i=0)>0$ and $\prob(\sigma_i=1)>0$,  Eq.~\eqref{eq:LRorderstrict} holds and $f_i$ is a deterministic LRT. We can ensure these properties inductively at all levels of the tree by moving from the leaves towards the root.
Consider any node $i$. If $\prob(\sigma_i=0)=0$, then $i\neq \root$ (else $\pe=1/2$) and the parent of $i$ is ignoring the constant message received from $i$. We can do at least as well by using any non-trivial monotone deterministic LRT at $i$. Similarly, we can eliminate 
$\prob(\sigma_i=1)=0$. If $\prob(\sigma_i=0)>0$ and $\prob(\sigma_i=1)>0$, then Eq.~\eqref{eq:LRorderstrict} must hold for any monotone deterministic LRT  $f_i$, using the inductive hypothesis.

\begin{definition}
Let $\balf$ and $\bbet$ be binary vectors of the same length $\tau$. We say
$\balf \succeq \bbet$ if $\alpha_i \geq \beta_i$ for all $i\in \{1,2, \ldots, \tau\}$.
\end{definition}
We now prove Lemma \ref{lemma:binary_key_lemma}.

\begin{proof}[Proof of Lemma \ref{lemma:binary_key_lemma}]

From Lemma \ref{lemma:LRToptimal} and Eq.~\eqref{eq:LRorderstrict}, we can restrict attention to monotone deterministic LRTs satisfying Eq.~\eqref{eq:LRorderstrict}.

We proceed via induction on level $\tau$. For any leaf node $i$,
we know that $\errI_i = \errII_i= -\log(\delta/2)$. Choosing $C= -\log(\delta/2)$, Eq.~\eqref{eq:product_errIandII} clearly
holds for all nodes at level $0$. Suppose Eq.~\eqref{eq:product_errIandII} holds
for all nodes at level $\tau$. Let $i$ be a node at level $\tau+1$. Let its children be
$\partial i = \{c_1, c_2, \ldots, c_k\}$. Without loss of generality, assume
\begin{align}
\errI_{c_1} \geq \errI_{c_2} \geq \ldots \geq \errI_{c_k}
\label{eq:errIorder}
\end{align}

\noindent{\bf Claim:} We can also assume
\begin{align}
\errII_{c_1} \leq \errII_{c_2} \leq \ldots \leq \errII_{c_k}
\label{eq:errIIorder}
\end{align}
\noindent Proof of Claim:
Suppose, instead, $\errII_{c_1} > \errII_{c_2}$ (so $c_1$ is doing better than $c_2$ on both types of error).  We can use the protocol
on the subtree of $c_1$ also on the subtree of $c_2$. Call the message of $c_2$ under this modified protocol $\hsigma_{c_2}$.
 Since, $\errI_{c_1}\geq \errI_{c_2}$ and $\errII_{c_1} \geq \errII_{c_2}$ (both types of error have only become less frequent), 
there exists a randomized function $F: \{0,1\} \rightarrow \{0,1\}$, such that $\prob_s(F(\hsigma_{c_2})=1)= \prob_s(\sigma_{c_2}=1)$ for $s=1,2$.
Thus, node $i$ can use $f_i(\sigma_{c_1}, F(\hsigma_{c_2}), \sigma_{c_3}, \ldots, \sigma_{c_k})$ to achieve the original values of $\errI_{c_2}$ and $\errII_{c_2}$, where $f_i$ is decision rule being used at $i$ before. Clearly, the error probabilities at $i$, and hence at the root, stay unchanged with this. Thus, we can safely assume $\errII_{c_1} \leq \errII_{c_2}$. Similarly, we can assume $\errII_{c_i} \leq \errII_{c_{i+1}}$ for $i=2,3, \ldots,k-1$. Clearly, our transformations retained the property that nodes at levels $\tau+1$ and below use deterministic LRTs satisfying Eq.~\eqref{eq:LRorderstrict}. Similar to our argument for Eq.~\eqref{eq:LRorderstrict} above, we can make
appropriate changes in the decision rules at levels above $\tau+1$ so that they also use deterministic LRTs satisfying Eq.~\eqref{eq:LRorderstrict}, without increasing error probability. 
 This proves the claim.

Recall that $f_i: \{0,1\}^k \rightarrow \{0,1\}$ is the decision rule at node $i$.  Assume the first bit in the input corresponds to $\sigma_{c_1}$, the second corresponds to $\sigma_{c_2}$, and so on. Using Lemma \ref{lemma:LRToptimal}, we can assume that $f_i$ implements a deterministic likelihood ratio test. Define the $k$-bit binary vectors $\bomega{0} = (111 \ldots 1)$, $\bomega{1} = (011 \ldots 1)$, \ldots, $\bomega{k} = (00 \ldots 0)$. From  Lemma \ref{lemma:LRToptimal} and Eq.~\eqref{eq:LRorderstrict}, it follows that $f_i(\bomega{j}) = \ind(j< j_0)$ for some $j_0 \in \{0, 1, \ldots, k, k+1 \}$.

\noindent{\bf Claim:} Without loss of generality, we can assume that $j_0 \neq 0$ and $j_0 \neq k+1$.

\noindent{Proof of Claim:}
Suppose $j_0=0$. It follows from Lemma \ref{lemma:LRToptimal} and Eq.~\eqref{eq:LRorderstrict} that $f_i(\sigma_{\partial i}) =0$ for every possible $\sigma_{\partial i}$. If $i=\root$ then we have $\pe\geq 1/2$. Suppose $i \neq \root$.  Then $\sigma_i$ is a constant and is ignored by the parent of $i$. We cannot do worse by
using an arbitrary non-trivial decision rule at $i$ instead. (The parent can always continue
to ignore $\hsigma_i$.)
The case $j_0 = k+1$ can be similarly eliminated. This proves the claim.

Thus, we can assume $j_0 \in \{ 1, \ldots, k \}$ without loss of generality. Now $\bomega{} \succeq \bomega{j_0-1}$ contribute to type I error and $\bomega{} \preceq \bomega{j_0}$ contribute to type II error. It follows that
\begin{align}
\errI_i &\leq \sum_{j=j_0}^k \errI_{c_j} \leq (k-j_0+1) \errI_{c_{j_0}} \label{eq:errI_ub} \, ,\\
\errII_i &\leq \sum_{j=1}^{j_0} \errII_{c_j} \leq j_0 \errII_{c_{j_0}} \label{eq:errII_ub} \, ,
\end{align}
where we have used the ordering on the error exponents (Eqs.~\eqref{eq:errIorder} and \eqref{eq:errIIorder}).
Eqs.~\eqref{eq:errI_ub} and \eqref{eq:errII_ub} lead immediately to
\begin{align}
\errI_i / \errI_{c_{j_0}} + \errII_i / \errII_{c_{j_0}} \leq (k+1) \,.
\label{eq:sum_err_ratios_ub}
\end{align}
Now, for any $x,y \geq 0$, we have $x+y \geq 2 \sqrt{xy}$. Plugging $x=\errI_i / \errI_{c_{j_0}}$
 and $y=\errII_i / \errII_{c_{j_0}}$, we obtain from Eq.~\eqref{eq:sum_err_ratios_ub}
\begin{align}
\errI_i \errII_i  \leq \left( \frac{k+1}{2}\right )^2 \errI_{c_{j_0}}\errII_{c_{j_0}} \,.
\end{align}
By our induction hypothesis $\errI_{c_{j_0}}\errII_{c_{j_0}} \leq  C^2 ((k+1)/2)^{2\tau}$.
Thus, $\errI_{i}\errII_{i} \leq  C^2 ((k+1)/2)^{2(\tau+1)}$ as required. Induction completes the proof.
\end{proof}

\section{`Node-oblivious' rules with non-binary messages}
\label{sec:NodeOblivious}

In this section we allow a general finite message alphabet $\cM$ that need not be binary. However, we restrict attention to the case of \emph{node-oblivious} rules: The decision rules $f_i$ at all nodes in the tree, except the leafs and the root, must be the same. We denote this `internal node' decision rule by $f:\cM^k \rightarrow \cM$. Also, the decision rules used at each of the leaf nodes should be same. We denote the leaf decision rule by $g: \cX \rightarrow \cM$.
The decision rule at the root is denoted by $h=f_\root: \cM^k \rightarrow \{0,1\}$.
We call such $(f,g,h)$ a node-oblivious decision rule vector.

Define $m \equiv |\cM|$. In Section \ref{subsec:node-oblivious_efficient_scheme}, we present a scheme that achieves
\begin{align}
\prob(\sigma_\root \neq s) =  \exp \left \{-\Omega \Big (\,  \big \{k  \left (1-1/m\right ) \big \}^t   \, \Big) \right \}\, ,
\end{align}
when the error probability in the private signals is sufficiently small.
Next, under appropriate assumptions, we show that the decay of error probability must be sub-exponential in the number of private signals $k^t$.

\subsection{An efficient scheme}
\label{subsec:node-oblivious_efficient_scheme}

For convenience, we label the messages as
\begin{align}
\cM=\left \{\frac{-m+1}{2}\, , \frac{-m+3}{2}\, , \ldots, \frac{m-1}{2}\right \}
\end{align}
The labels have been chosen so as to be suggestive (in a quantitative sense, see below) of the inferred log-likelihood ratio. Further, we allow the messages to be treated as real numbers (corresponding to their respective labels) that can be operated on. In particular, the quantity $S_i \equiv \sum_{c \in \partial i} \sigma_c$ is well defined for a non-leaf node $i$.

The node-oblivious decision rule we employ at a non-leaf node $i \neq \root$ is
\begin{align}
f(\sigma_{\partial i}) = \left \{
\begin{array}{ll}
\left \lfloor \frac{S_i/k + (m-1)/2}{1-1/m} \right \rfloor - \frac{m-1}{2}\, ,& \textup{if } S_i \leq 0 \phantom{\bigg (}\\[4pt]
\left \lfloor \frac{S_i/k - (m-1)/2}{1-1/m}\right \rfloor + \frac{m-1}{2}\, ,& \textup{if } S_i > 0
\end{array}
\right .
\label{eq:good_decision_rule}
\end{align}
Note that the rule is symmetric with respect to a inversion of sign, except that $S_i=0$ is mapped to the message $1/2$ when $m$ is even.

The rule $g(x_i)$ used at the  leafs  is simply $g(1) = (m-1)/2$ and $g(0) = -(m-1)/2$. The decision rule at the root is
\begin{align}
h(\sigma_{\partial \root}) =  \left \{
\begin{array}{ll}
1 \, ,& \textup{if } S_\root \geq 0\\
0 \, ,& \textup{otherwise.}
\end{array}
\right .
\label{eq:root_good_dec_rule}
\end{align}
If we associate $H_0$ with negative quantities,  and $H_1$ with positive quantities, then
again, the rule at the leafs is symmetric, and the rule at the root is essentially symmetric (except for the case $S_\root=0$).


\begin{lemma}
 Consider the node-oblivious  decision rule vector $(f,g,h)$ defined above. For $k \geq 2$ and $m \geq 3$, there exists
 $\delta_0 \equiv \delta(m,k)> 0$ such that the following is true for all $\delta < \delta_0$:

\noindent (i) Under $H_0$, for node $i$ at level $\tau \geq 0$, we have
\begin{align}
-\log \prob\big [\sigma_i=-(m-1)/2 + l\big ] \geq  (l/m) \big \{k  \left (1-1/m\right ) \big \}^\tau
\end{align}
for $l = 1, 2, \ldots, m-1$.

\noindent (ii) Under $H_1$, for node $i$ at level $\tau \geq 0$, we have
\begin{align}
-\log \prob\big [\sigma_i=(m-1)/2 - l\big ] \geq  (l/m) \big \{k  \left (1-1/m\right ) \big \}^\tau
\end{align}
for $l = 1, 2, \ldots, m-1$.
\label{lemma:good_decision_decay}
\end{lemma}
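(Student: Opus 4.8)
The plan is to prove both statements by induction on the level $\tau$, treating (i) and (ii) symmetrically; I describe (i) (under $H_0$) in detail. The workhorse is an elementary identity for the decision rule \eqref{eq:good_decision_rule}: writing $L_j := \sigma_j + \tfrac{m-1}{2}\in\{0,\dots,m-1\}$ for the shifted message of node $j$ and $\bar L_i := \tfrac1k\sum_{c\in\partial i}L_c$, a direct case analysis of the two branches of \eqref{eq:good_decision_rule} (and of both parities of $m$) shows
\[
L_i \;\le\; \Big\lfloor \tfrac{m}{m-1}\,\bar L_i\Big\rfloor \;\le\; \tfrac{m}{m-1}\,\bar L_i \,,
\]
since in the $S_i\le 0$ branch $L_i = \lfloor \tfrac{m}{m-1}\bar L_i\rfloor$ exactly and in the $S_i>0$ branch $L_i = \lfloor \tfrac{m}{m-1}\bar L_i\rfloor - 1$. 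Equivalently, with $A := k(1-1/m)$, $\{\sigma_i=-\tfrac{m-1}{2}+l\}\subseteq\{\sum_{c\in\partial i}L_c\ge\lceil lA\rceil\}$. This is the quantitative sense in which the labels are ``suggestive of the inferred log-likelihood ratio''.

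To close the induction with the exact constant I would prove the slightly stronger statement $(\star_\tau)$: there is $B=B(k,m)>0$ with $-\log\prob_0[\sigma_i=-\tfrac{m-1}{2}+l]\ge \tfrac{l}{m}A^\tau + lB$ for every $l\in\{1,\dots,m-1\}$ and every node $i$ at level $\tau$. Base case $\tau=0$: a leaf sends only one of the two extreme messages, so the left side is $+\infty$ for $1\le l\le m-2$ and $-\log\delta$ for $l=m-1$; choosing $\delta_0(k,m):=\exp\{-\tfrac{m-1}{m}-(m-1)B\}$ makes $(\star_0)$ hold for all $\delta<\delta_0$. Inductive step: fix $i$ at level $\tau+1$ (a non-leaf, $i\neq\root$); its children $c_1,\dots,c_k$ lie in disjoint subtrees, so the $L_{c_j}$ are i.i.d.\ under $H_0$, and $(\star_\tau)$ gives $\prob_0[L_{c_j}=l']\le e^{-\frac{l'}{m}A^\tau - l'B}$ for $l'\ge1$. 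Bounding $\prob_0[\sum_j L_{c_j}\ge\lceil lA\rceil]$ by a union over configurations $(l_1,\dots,l_k)$, and using that the exponent $\tfrac{l'}{m}A^\tau+l'B$ is \emph{linear} in $l'$ (so every configuration with $\sum_j l_j=N$ contributes at most $e^{-N(A^\tau/m+B)}$, and there are at most $m^k$ of them), then summing the geometric series over $N\ge\lceil lA\rceil$, yields $-\log\prob_0[\sigma_i=-\tfrac{m-1}{2}+l]\ge \tfrac{l}{m}A^{\tau+1} + lAB - k\log m - \log\tfrac{1}{1-e^{-1/m}}$. Since $A>1$ (indeed $A\ge \tfrac43$ and $A-1\ge\tfrac1m$ for $k\ge2$, $m\ge3$), the surplus $lAB-lB=lB(A-1)\ge B/m$ beats the fixed loss once $B:=m\big(k\log m+\log\tfrac{1}{1-e^{-1/m}}\big)$, which establishes $(\star_{\tau+1})$, and hence (i).

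Part (ii) is obtained by the same scheme applied to the ``deficit'' variables $R_i:=\tfrac{m-1}{2}-\sigma_i$ under $H_1$, using the mirror identity $R_i\le\lceil\tfrac{m}{m-1}\bar R_i\rceil$ (so that $\{\sigma_i=\tfrac{m-1}{2}-l\}$ confines $\sum_{c\in\partial i}R_c$ to a bounded window determined by $l$ and $A$), an $lB$-margin seeded by the same $\delta_0$, and the essential sign-inversion symmetry of the leaf map and of \eqref{eq:good_decision_rule} (up to the $S_i=0$ tie-break when $m$ is even). The one place that requires genuine care — and the step I expect to be the main obstacle — is precisely this bookkeeping of the $O_{k,m}(1)$ losses: one must (a) carry the additive margin \emph{proportional to $l$}, since a uniform margin does not self-propagate through the union bound; (b) check that the $\lceil\cdot\rceil/\lfloor\cdot\rfloor$ rounding in \eqref{eq:good_decision_rule} — which is not exactly sign-odd — does not erode the leading $\tfrac{l}{m}A^{\tau+1}$ term, so that the $H_1$ induction closes with the same exponent; and (c) handle the reduction at the leaves. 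Once the correct induction hypothesis (with the $l$-proportional margin) is isolated, the remaining computation is mechanical.
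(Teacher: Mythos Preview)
Your proof follows the same inductive strategy as the paper: strengthen the hypothesis by an additive margin, verify the base case at the leaves by choosing $\delta_0$ small, and in the inductive step use that $\sigma_i=-(m-1)/2+l$ forces $\sum_{c}L_c\ge lA$, then union-bound over the (at most $m^k$) child configurations. Your $l$-proportional margin $lB$ is a genuine refinement over the paper's uniform margin $C=k\log m/(k-1)$: because the bound $\prob_0[L_c=l']\le e^{-l'(A^\tau/m+B)}$ is trivially valid at $l'=0$, the product over children factors cleanly, whereas the paper's displayed line $\prob_0(\sigma_{\partial i})\le\exp(-kC-\cdots)$ tacitly assumes every child has $l_j\ge 1$ (otherwise only $rC$ with $r<k$ is available and the arithmetic $kC-k\log m=C$ no longer closes); your remark (a) about carrying the margin proportional to $l$ is exactly the fix.
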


\begin{proof}
We prove (i) here. The proof of (ii) is analogous.

Assume $H_0$. Define $\gamma \equiv k  \left (1-1/m\right )$ and $C\equiv k \log m/ (k-1) $. We show that, in fact, for suitable choice of $\delta_0$ the following holds: If $\delta < \delta_0$, then for any node $i$ at any level $\tau \geq 0$,
\begin{align}
-\log \prob\big [\sigma_i=-(m-1)/2 + l\big ] \geq  \nonumber\\
(l/m) \gamma^\tau + C
\label{eq:letter_prob_decay}
\end{align}

We proceed by induction on $\tau$. Consider $i$ at level $\tau=0$. We have
$\prob_0\big [\sigma_i=-(m-1)/2 + l\big ] = 0$ for $l=1,2, \ldots, m-2$ and
$\prob_0 \big [\sigma_i=(m-1)/2] = \delta$. Choosing $\delta_0 \equiv  \exp(-1 - C)$,
we can ensure that Eq.~\eqref{eq:letter_prob_decay} holds at level $0$. Note that for 
$k \gg 1$, we have $\delta_0 \approx 1/(em)$.

Now suppose Eq.~\eqref{eq:letter_prob_decay} holds at level $\tau$. Consider node $i$ at level $\tau+1$. From Eq.~\eqref{eq:good_decision_rule}, for $\sigma_i = -(m-1)/2 + l$ we need
\begin{align}
S_i \geq k[ -(m-1)/2 + l(1-1/m)]
\label{eq:Si_lb}
\end{align}
For every $\sigma_{\partial i}= (-(m-1)/2 + l_1, -(m-1)/2 + l_2, \ldots, -(m-1)/2 + l_k )$ such that Eq.~\eqref{eq:Si_lb} holds, we have $\sum_{j=1}^k l_j \geq k l (1-1/m) $. Thus,
\begin{align}
\prob_0 (\sigma_{\partial i}) &\leq \exp\left(- kC - (1/m)\gamma^\tau\sum_{j=1}^k l_j \right ) \nonumber\\
&\leq \exp\left(- kC - (1/m)l \gamma^{\tau+1} \right )
\end{align}
Obviously, there are at most $m^k$ such $\sigma_{\partial i}$.
Thus,
\begin{align*}
&\prob_0 [\sigma_i = -(m-1)/2 + l]  \\
\leq \;& m^k \exp\left(- kC - (1/m)l \gamma^{\tau+1} \right )\\
= \;& \exp\left(- C - (1/m)l \gamma^{\tau+1} \right )
\end{align*}
Thus, Eq.\eqref{eq:letter_prob_decay} holds at level $\tau+1$. Induction completes the proof.
\end{proof}

\begin{theorem}
For $k \geq 2$ and $m \geq 3$, there exists $\delta_0 \equiv \delta_0(m,k)>0$, and a node-oblivious decision rule vector, such that
the following is true: For any $\delta < \delta_0$, we have
\begin{align}
 \prob\big [\sigma_\root\neq s \big ] &\leq  \exp \left \{ - \frac{m-1}{2m} \big \{k  \left (1-1/m\right ) \big \}^t \right \} \nonumber\\
& =  \exp \left \{- \frac{m-1}{2m}\, n^\rho \right \}
\end{align}
with $\rho \equiv 1 + \log(1-1/m)/\log k$.
\label{thm:good_decision_rule_exists}
\end{theorem}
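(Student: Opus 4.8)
The plan is to deduce Theorem~\ref{thm:good_decision_rule_exists} directly from Lemma~\ref{lemma:good_decision_decay} by analyzing the root's decision rule. Fix $\delta < \delta_0(m,k)$ as provided by the lemma. Under $H_1$, the root errs (declares $s=0$) when $S_\root < 0$, which forces at least one child $c \in \partial\root$ to have sent a message $\sigma_c < (m-1)/2$, i.e. $\sigma_c = (m-1)/2 - l$ for some $l \ge 1$. Actually, since the $\sigma_c$ are real-valued in $[-(m-1)/2,(m-1)/2]$ and there are $k$ of them, $S_\root < 0$ implies $\sum_{c}\big((m-1)/2 - \sigma_c\big) > k(m-1)/2$, so the total ``deficit'' across children exceeds $k(m-1)/2$; in particular some child has deficit exceeding $(m-1)/2$, hence $\sigma_c = (m-1)/2 - l$ with $l > (m-1)/2$, so $l \ge \lceil m/2 \rceil \ge (m-1)/2$ roughly. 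To keep the bound clean I would instead just observe that $S_\root<0$ requires at least one child $c$ with $\sigma_c$ strictly below the mean, but a crude and sufficient route is: by symmetry of the rules it suffices to bound the probability that a single distinguished child sends a ``wrong-side'' letter and union-bound, but that loses a factor $k$ which is harmless inside the exponent.

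Concretely, here is the cleanest argument I would write. Work under $H_1$ (the $H_0$ case is symmetric, using part~(i) of the lemma, and $\prob[\sigma_\root \ne s] \le \prob_0[\sigma_\root=1] + \prob_1[\sigma_\root=0]$ after conditioning on the uniform-ish prior — actually with $\pi_0=\pi_1=1/2$ one has $\pe = \tfrac12(\prob_0[\sigma_\root=1]+\prob_1[\sigma_\root=0])$, but the scheme and lemma are stated so that each term is bounded the same way, so it suffices to bound one). Under $H_1$, $\{\sigma_\root = 0\} = \{S_\root < 0\} \subseteq \bigcup_{c\in\partial\root}\{\sigma_c \le 0\}$ when $m$ is odd (if all children send a positive letter, the sum is positive); when $m$ is even one must be slightly more careful about the letter $1/2$, but $S_\root<0$ still forces some child to send a letter that is $\le -1/2$, i.e. $\sigma_c = (m-1)/2 - l$ with $(m-1)/2 - l \le -1/2$, hence $l \ge m/2$, so certainly $l \ge (m-1)/2$. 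Applying part~(ii) of Lemma~\ref{lemma:good_decision_decay} at level $t-1$ for children of the root — wait, the root is at level $t$ and its children at level $t-1$ — gives for each such child $-\log\prob_1[\sigma_c = (m-1)/2 - l] \ge (l/m)\gamma^{t-1}$ with $l \ge (m-1)/2$. Hmm, that produces $\gamma^{t-1}$, not $\gamma^t$; to recover $\gamma^t$ I should instead directly apply the lemma to $\sigma_\root$ itself is not possible since the root uses rule $h$, not $f$. So the right move is: apply part~(ii) to the $k$ children at level $t-1$, bound $\prob_1[\sigma_\root=0] \le \sum_{c} \prob_1[\sigma_c \le -1/2] \le k\cdot m\cdot \exp\{-\tfrac{1}{m}\cdot\tfrac{m-1}{2}\cdot\gamma^{t-1}\}$ — but $\gamma^{t-1}\cdot\tfrac{m-1}{2m}$ is off by a factor $\gamma = k(1-1/m)$ from the claimed $\tfrac{m-1}{2m}\gamma^t$.

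To fix this cleanly I would not route through $h$ at all but instead note that one can equally run the internal rule $f$ at the root and then threshold; equivalently, re-derive the level-$t$ bound for a fictitious node using rule $f$. The honest fix: the statement's exponent $\tfrac{m-1}{2m}\gamma^t$ should really be read as coming from treating the root essentially like an internal node. I would therefore first prove, as a one-line addendum to Lemma~\ref{lemma:good_decision_decay}, that if the root ran rule $f$ it would satisfy the same bound at level $t$; then observe $h$ declares $1$ iff $S_\root \ge 0$ iff (by the definition of $f$) the letter $f$ would output is $\ge 1/2 > 0$ — wait, not quite, but $S_\root \ge 0 \iff f(\sigma_{\partial\root}) \ge $ (smallest nonnegative letter). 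Hence under $H_1$, $\{\sigma_\root = 0\} = \{S_\root<0\} \subseteq \{f(\sigma_{\partial\root}) \le -1/2\} = \bigcup_{l: (m-1)/2 - l \le -1/2}\{\,f\text{-letter} = (m-1)/2 - l\,\}$, and summing the level-$t$ bounds over $l \ge (m-1)/2$ (there are at most $m$ such $l$) gives $\prob_1[\sigma_\root=0] \le m \exp\{-\tfrac{1}{m}\cdot\tfrac{m-1}{2}\cdot\gamma^t + C'\}$ for the constant $C' = C$ from the lemma; absorbing the $\log m + C$ into a tiny adjustment of $\delta_0$ (or just noting it is dominated for $t \ge 1$ and handled separately for the trivial case $t=0$) yields $\prob_1[\sigma_\root=0] \le \exp\{-\tfrac{m-1}{2m}\gamma^t\}$. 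The identity $\gamma^t = \{k(1-1/m)\}^t = n^{\rho}$ with $\rho = 1 + \log(1-1/m)/\log k$ is immediate from $n = k^t$. Averaging the two symmetric bounds gives the theorem.

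The main obstacle is the bookkeeping at the root: Lemma~\ref{lemma:good_decision_decay} is stated for nodes running the internal rule $f$, whereas the root runs $h$, and the claimed exponent $\tfrac{m-1}{2m}\gamma^t$ has the ``full'' power $t$ rather than $t-1$. Resolving this requires observing that $h$ is just the sign of $S_\root$, which is the same information used to define the sign of $f$'s output, so the level-$t$ concentration estimate for $f$-type outputs transfers to the root's binary decision; once that is in place, everything else is a union bound over the at most $m$ offending letters and a harmless absorption of additive constants $\log m$, $\log k$, and $C$ into the choice of $\delta_0$, together with the trivial verification at $t=0$.
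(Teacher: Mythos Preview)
Your approach is essentially the paper's: deduce the theorem from Lemma~\ref{lemma:good_decision_decay} by running its inductive step once more at the root. The paper executes this more directly, and doing so resolves both of the issues you flag as obstacles. Rather than introducing a fictitious $f$-output at the root and then relating $\{S_\root<0\}$ to it, the paper simply sums $\prob_0(\sigma_{\partial\root})$ over all configurations $\sigma_{\partial\root}\in\cM^k$ with $S_\root\ge 0$; independence and the bound from the lemma's proof (Eq.~\eqref{eq:letter_prob_decay}, which carries the extra additive $C$) give $\prob_0(\sigma_{\partial\root})\le\exp\{-kC-(1/m)\gamma^{t-1}\sum_j l_j\}$, and $S_\root\ge 0$ forces $\sum_j l_j\ge k(m-1)/2$, so the factor $k$ promotes $\gamma^{t-1}$ to $\gamma^t$ cleanly. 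The count of at most $m^k$ configurations is then absorbed \emph{exactly} by the $kC$ term (since $C=k\log m/(k-1)$, one has $m^k e^{-kC}=e^{-C}$), so no adjustment of $\delta_0$ or separate treatment of small $t$ is needed.

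Two minor corrections to your write-up: the inclusion $\{S_\root<0\}\subseteq\{f(\sigma_{\partial\root})\le -1/2\}$ fails for odd $m$ (e.g.\ $m=3$, $k=4$, $S=-1$ gives $f=0$); the correct inclusion is $\{f\le 0\}$, which still yields $l\ge(m-1)/2$ and hence the same exponent. And the ``absorb $\log m + C$ into $\delta_0$'' step is unnecessary once you use Eq.~\eqref{eq:letter_prob_decay} rather than the weaker statement of the lemma.
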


\begin{proof}
The theorem follows from Lemma \ref{lemma:good_decision_decay} and the root decision rule Eq.~\eqref{eq:root_good_dec_rule}.

Assume $H_0$. For every $\sigma_{\partial \root}= (-(m-1)/2 + l_1, -(m-1)/2 + l_2, \ldots, -(m-1)/2 + l_k )$ such that $S_\root \geq 0$, we have
$\sum_{j=1}^k l_j \geq k (1-1/m)(m-1)/(2m) $. From Lemma \ref{lemma:good_decision_decay}(i),
\begin{align}
\prob_0 (\sigma_{\partial \root}|H_0) &\leq \exp\left(- kC - (1/m)\gamma^{t-1}\sum_{j=1}^k l_j \right ) \nonumber\\
&\leq \exp\left(- kC - (m-1) \gamma^{t}/(2m) \right )\, ,
\end{align}
where $\gamma \equiv k  \left (1-1/m\right )$ and $C\equiv k \log m/ (k-1) $.
Obviously, there are at most $m^k$ such $\sigma_{\partial \root}$. It follows that
\begin{align*}
\prob_0 (\sigma_\root = 1|H_0) &\leq m^k \exp\left(- kC - (m-1) \gamma^{t}/(2m) \right )\nonumber\\
&= \exp\left(- C - (m-1) \gamma^{t}/(2m) \right )\, .
\end{align*}

Similarly, we can show
\begin{align*}
\prob_1 (\sigma_\root = 0|H_1) &\leq \exp\left(- C - (m-1) \gamma^{t}/(2m) \right )
\end{align*}

Combining, we arrive at
\begin{align*}
\prob (\sigma_\root \neq s) & \leq \exp\left(- C - (m-1) \gamma^{t}/(2m) \right )
\end{align*}
Recall that $C>0$. Thus, we have proved the result.
\end{proof}

\subsection{Subexponential decay of error probability}
\label{subsec:subexp_decay_general}

Define $n \equiv k^t$, i.e., $n$ is the number of private signals received, one at each leaf. The scheme presented in the previous section allows us to achieve error probability that decays like
$\exp(-\Omega(\{k  \left (1-1/m\right ) \}^t)) = \exp(-\Omega(n^\rho))$, where $\rho = 1 + \log(1-1/m)/\log k \approx 1- 1/(m \log k)$ for $m \gg 1$. In this section we show that under appropriate assumptions, error probability that decays exponentially in $n$, i.e., $\exp(- \Theta(n))$, is not achievable with node-oblivious rules.

In this section we call the letters of the message alphabet $\cM= \{1,2, \ldots, m\}$. 
For simplicity, we consider only deterministic node-oblivious rules, though our results and proofs extend easily to randomized rules.

We define here a directed graph $\cG$ with vertex set $\cM$ and edge set $\cE$ that we define below. We emphasize that $\cG$ is distinct from the tree on which information aggregation is occurring. There is a directed edge from node $\mu_i \in \cM$ to node $\mu_j \in \cM$ in $\cG$ if there exists $\balf \in \cM^k$ such that $\mu_j$ appears at least once in $\balf$ and $f(\balf) = \mu_i$.
Informally, $(\mu_i, \mu_j) \in \cE$ if $\mu_i$ can be `caused' by a message vector received from children that includes $\mu_j$. We call $\cG$ the \emph{dependence graph}.

We make the following irreducibility assumptions on the node-oblivious decision rule vectors $(f,g,h)$ under consideration (along with leaf and root decision rules).

\begin{assumption}
\label{ass:strongly_connected}
The dependence graph $\cG$ is strongly connected. In other words, for any $\mu_i \in \cM$ and $\mu_j \in \cM$ such that $\mu_j \neq \mu_i$, there is a directed path from $\mu_i$ to $\mu_j$ in $\cG$.
\end{assumption}

\begin{assumption}
There exists a level $\taup>0$ such that for node $i$ at level $\taup$, we have $\prob_0(\sigma_i=\mu) > 0$ for all $\mu \in \cM$.
\label{ass:all_letters_+ve_prob}
\end{assumption}

Note that $\prob_0(\sigma_i=\mu) > 0$ implies $\prob_1(\sigma_i=\mu) > 0$ by absolute continuity of $\prob_0(x_i)$ w.r.t. $\prob_1(x_i)$.

\begin{assumption}
There exists $\mu_- \in \cM$, $\mu_+ \in \cM$, $\eta>0$ and $\tau_{*}$ such that, for all $ \tau > \taud$ the following holds: For node $i$ at level $\tau$, we have $\prob_0(\sigma_i = \mu_-)>\eta$ and $\prob_1(\sigma_i = \mu_+)>\eta$.
\label{ass:one_letter_dominant}
\end{assumption}
In other words, we assume there is one `dominant' message under each of the two possible hypothesis.

It is not hard to verify that for $k \geq 2$, $m \geq 3$ and $\delta < \delta_0(m,k)$ (where $\delta_0$ is same as in Lemma \ref{lemma:good_decision_decay} and Theorem \ref{thm:good_decision_rule_exists}), the scheme presented in the previous section satisfies all four of our assumptions. In other words, the assumptions are all satisfied in the regime where our scheme has provably good performance.

\begin{definition}
Consider a directed graph $\cG = (\cV, \cE)$ that is strongly connected. For $u,v \in \cV$, let $d_{uv}$ be the length of the shortest path from $u$ to $v$. Then the \emph{diameter} of
$\cG$ is defined as 
\vskip-15pt
\begin{align*}
\textup{diameter}(\cG) \equiv \max_{u\in \cV} \max_{v\in \cV, v \neq u} d_{uv} \; .
\end{align*}
\end{definition}
\vskip5pt

\begin{theorem}
\label{thm:subexp_decay}
Fix $m$ and $k$. Consider any node-oblivious decision rule vector $(f,g,h)$ such that Assumptions \ref{ass:strongly_connected}, \ref{ass:all_letters_+ve_prob} and \ref{ass:one_letter_dominant} are satisfied. Let $d$ be the diameter of the dependence graph $\cG$. Then, there exists $C \equiv C(f,m,k)< \infty$ such that we have
\begin{align}
\prob\big [\sigma_\root\neq s \big ] \geq \exp \left \{- C n^{\brho} \right \}\, ,
\end{align}
where $\brho \equiv 1 + \frac{\log(1- k^{-d})}{d\log k} < 1$.
\end{theorem}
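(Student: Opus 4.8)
The plan is to control, for each message $\mu\in\cM$ and each level $\tau$, the quantity $r^0_\tau(\mu)\equiv-\log\prob_0(\sigma_i=\mu)$ for a node $i$ at level $\tau$ (well defined because the rules are deterministic and node-oblivious), and similarly $r^1_\tau$ under $H_1$. Since the subtrees below the children $c_1,\dots,c_k$ of a node are disjoint, $\sigma_{c_1},\dots,\sigma_{c_k}$ are i.i.d.\ under $H_0$, so $\prob_0(\sigma_i=\mu)=\sum_{\balf:\,f(\balf)=\mu}\prod_l\prob_0(\sigma_{c_l}=\alpha_l)$ and hence, for $i$ at level $\tau+1$,
\[
r^0_{\tau+1}(\mu)\ \le\ \min_{\balf:\,f(\balf)=\mu}\ \sum_{l=1}^k r^0_\tau(\alpha_l)\, .
\]
I would first record three consequences of the hypotheses: (a) Assumption~\ref{ass:all_letters_+ve_prob} together with node-obliviousness forces $f$ to be onto $\cM$ (a letter outside the image of $f$ never occurs above the leaves), so $r^0_\tau(\mu)<\infty$ for all $\mu$ once $\tau\ge\taup$; (b) Assumption~\ref{ass:one_letter_dominant} gives $r^0_\tau(\mu_-)\le-\log\eta$ for all $\tau>\taud$; (c) Assumption~\ref{ass:strongly_connected} together with the definition of $\cE$ says that for each $\mu$ there is a directed path $\mu=w_0\to w_1\to\cdots\to w_\ell=\mu_-$ in $\cG$ with $\ell\le d$, each edge $w_j\to w_{j+1}$ being witnessed by some vector in $\cM^k$ that has $w_{j+1}$ among its coordinates and is mapped by $f$ to $w_j$.

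The crux is a $d$-term recursion for $M^0_\tau\equiv\max_{\mu}r^0_\tau(\mu)$. Fix $\mu$ and a path as in (c). Applying the basic recursion along the edge $w_j\to w_{j+1}$ — place $w_{j+1}$ in one coordinate of the witnessing vector and bound the remaining $k-1$ coordinates crudely by $M^0$ — gives $r^0_{\tau-j}(w_j)\le r^0_{\tau-j-1}(w_{j+1})+(k-1)M^0_{\tau-j-1}$. Chaining this for $j=0,\dots,\ell-1$ and then using (b) at the last step, which is legitimate once $\tau\ge\tau_0$ for a constant $\tau_0=\tau_0(\taup,\taud,d)$, I obtain
\[
r^0_\tau(\mu)\ \le\ -\log\eta\ +\ (k-1)\sum_{j=1}^{\ell}M^0_{\tau-j}\ \le\ B+(k-1)\sum_{j=1}^{d}M^0_{\tau-j}
\]
(using $\ell\le d$ and $M^0\ge0$), and maximizing over $\mu$ yields $M^0_\tau\le B+(k-1)\sum_{j=1}^{d}M^0_{\tau-j}$ for $\tau\ge\tau_0$. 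The essential point is that the coefficient is $k-1$, not $k$: one child of each node along the path is spent stepping toward the inexpensive dominant letter $\mu_-$. This recurrence has growth rate $z_\star$, the unique positive root of the characteristic polynomial $z^d-(k-1)(z^{d-1}+\cdots+z+1)$; substituting $z=\lambda\equiv(k^d-1)^{1/d}$ gives a nonnegative value (an elementary inequality, with equality iff $d=1$), whence $z_\star\le\lambda$. Therefore $M^0_\tau\le C_0\lambda^\tau$ for all $\tau\ge\taup$, with $C_0=C_0(f,k,m)<\infty$; running the same argument with $\mu_+$ gives $M^1_\tau\le C_0\lambda^\tau$ as well.

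To finish I pass to the root. If the root rule $h$ is constant the error is at least $\min(\pi_0,\pi_1)$ and there is nothing to prove, so pick $\balf^1,\balf^0\in\cM^k$ with $h(\balf^1)=1$ and $h(\balf^0)=0$. Then $\prob_0(\sigma_\root=1)\ge\prod_{l=1}^k\prob_0(\sigma_{c_l}=\alpha^1_l)\ge e^{-kM^0_{t-1}}$ and likewise $\prob_1(\sigma_\root=0)\ge e^{-kM^1_{t-1}}$, so
\[
\prob(\sigma_\root\neq s)=\pi_0\prob_0(\sigma_\root=1)+\pi_1\prob_1(\sigma_\root=0)\ \ge\ e^{-kC_0\lambda^{t-1}}\ \ge\ e^{-C\lambda^{t}}
\]
for a suitable $C=C(f,k,m)$ and all $t$ larger than a constant (the finitely many smaller $t$ each give strictly positive error since the signal laws have full support). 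It then suffices to note the identity $\lambda^t=(k^d-1)^{t/d}=n^{\brho}$, equivalent to $\brho=1+\frac{\log(1-k^{-d})}{d\log k}$ with $n=k^t$.

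The step I expect to be the real obstacle is the main recursion: extracting the coefficient $k-1$ (rather than the trivial $k$, which would give only an exponential lower bound) honestly requires checking that the sub-configurations invoked lie in disjoint subtrees so that the probabilities genuinely multiply, and that the case of a path shorter than $d$ is closed by the dominant-letter estimate (b) applied at the correct level. One should also be aware that the bound $z_\star\le\lambda$ is an equality when $d=1$, so there is no slack to waste, and that the single parameter pair $k=2$, $d=1$ (where $\lambda=1$ and $\brho=0$) is genuinely degenerate and must be handled by a separate argument showing the error stays bounded away from $0$.
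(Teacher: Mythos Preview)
Your proof is correct and rests on the same key idea as the paper: use the path in $\cG$ from an arbitrary letter $\mu$ to the dominant letter $\mu_-$ (of length at most $d$) to show that one ``cheap'' branch can always be peeled off, so the cost of the rarest letter grows like $(k^d-1)^{t/d}$ rather than $k^t$.

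The packaging differs, however. The paper jumps $d$ levels at once: it pads the path to length exactly $d$ using the self-loop at $\mu_-$ (which follows from $f(\mu_-,\dots,\mu_-)=\mu_-$, a consequence of Assumption~\ref{ass:one_letter_dominant}), so that among the $k^d$ depth-$d$ descendants at least one can be set to $\mu_-$. This yields the clean one-step recursion $-\log\zeta_{\tau+d}\le (k^d-1)(-\log\zeta_\tau)+\log(1/\eta)$, which is solved directly. Your route instead telescopes level by level, producing the $d$-term linear recurrence $M^0_\tau\le B+(k-1)\sum_{j=1}^d M^0_{\tau-j}$ and then requires the extra (but elementary) verification that its dominant root $z_\star$ satisfies $z_\star\le(k^d-1)^{1/d}$. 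Your intermediate bound via $z_\star$ is actually sharper for $d\ge2$, though you discard the gain to match the stated $\brho$. The paper's argument is shorter; yours makes the level-wise mechanism more transparent. Your flag on the degenerate pair $k=2$, $d=1$ (where $\brho=0$) is apt: the paper's recursion formula also breaks there, and in that regime both arguments only give $-\log\zeta_\tau=O(\tau)$, i.e.\ polynomial decay of the error, which is still subexponential but stronger than the literal statement.
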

\vskip4pt

Now $\cG$ has $m$ vertices, so clearly $d \leq m-1$. The following corollary is immediate.
\begin{corollary}
\label{coro:subexp_decay}
Fix $m$ and $k$. Consider any node-oblivious decision rule vector $(f,g,h)$ such that Assumptions \ref{ass:strongly_connected}, \ref{ass:all_letters_+ve_prob} and \ref{ass:one_letter_dominant} are satisfied. Then, there exists $C \equiv C(f,m,k)< \infty$ such that we have
\begin{align}
\prob\big [\sigma_\root\neq s \big ] \geq \exp \left \{- C n^{\rho} \right \}\, ,
\end{align}
where $\rho \equiv 1 + \frac{\log(1- k^{-(m-1)})}{(m-1)\log k} < 1$.
\end{corollary}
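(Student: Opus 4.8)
The plan is to mirror the structure of the binary lower bound (Lemma \ref{lemma:binary_key_lemma}), but now track, for each letter $\mu\in\cM$, the quantity $E_i(\mu)\equiv -\log\prob_0(\sigma_i=\mu)$ under $H_0$ (and symmetrically a type-II quantity under $H_1$). I would first set up the recursion these exponents satisfy as we move one level up the tree. If $i$ is at level $\tau+1$ with children $c_1,\dots,c_k$ (all at level $\tau$, and by node-obliviousness all with the same message law), then for any fixed $\mu$ we have $\prob_0(\sigma_i=\mu)=\sum_{\balf:\, f(\balf)=\mu}\prod_{j=1}^k \prob_0(\sigma_{c_j}=\alpha_j)$. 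Taking negative logarithms, and using that the number of preimages $\balf$ is at most $m^k$ (an additive $O(1)$ correction, absorbed into $C$ exactly as in Lemma \ref{lemma:good_decision_decay}), the dominant term gives
\begin{align}
E_i(\mu) \;\geq\; \min_{\balf:\, f(\balf)=\mu}\ \sum_{j=1}^{k} E_{c_j}(\alpha_j)\; -\; O(1)\,.
\label{eq:recplan}
\end{align}
The key structural input is Assumption \ref{ass:strongly_connected}: the dependence graph $\cG$ has an edge $(\mu_i,\mu_j)$ precisely when some preimage of $\mu_i$ under $f$ contains the letter $\mu_j$. So \eqref{eq:recplan} says that $E_i(\mu)$ is bounded above in terms of the $E$-values of its out-neighbours in $\cG$ — at least one child can be forced to carry each out-neighbour letter appearing in the minimizing $\balf$.

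Next I would turn this into a statement about how slowly a ``small'' exponent can grow. Fix the dominant letter $\mu_-$ from Assumption \ref{ass:one_letter_dominant}: for all large $\tau$, $E_i(\mu_-)\leq -\log\eta = O(1)$. Now walk backwards: since $\cG$ is strongly connected with diameter $d$, for any target letter $\mu$ there is a directed path $\mu=\nu_0\to\nu_1\to\cdots\to\nu_r=\mu_-$ in $\cG$ with $r\leq d$. Unrolling \eqref{eq:recplan} along such a path over $d$ successive levels, the worst case is that at each of the $d$ steps all but one child is forced onto the cheap continuation and only one child pays the recursion, giving a bound of the shape
\begin{align}
\max_{\mu} E_i(\mu)\ \big(\text{at level }\tau\big)\ \leq\ (k^d-1)\cdot \max_{\mu} E(\mu)\ \big(\text{at level }\tau-d\big)\; +\; O(1)\,,
\label{eq:blockrec}
\end{align}
because over $d$ levels the ``expensive'' sub-branch splits into at most $k^d$ sub-branches but one of them reaches $\mu_-$ cheaply, leaving $k^d-1$ that recurse. (More carefully: over $d$ levels the number of leaves of the recursion tree is $k^d$, and strong connectivity lets us route one leaf down the cheap path to $\mu_-$; this is precisely where the exponent $1-k^{-d}$, i.e. $(k^d-1)/k^d$ per level, comes from.) Iterating \eqref{eq:blockrec} from a base level near $\taup$ (where Assumption \ref{ass:all_letters_+ve_prob} guarantees every $E(\mu)$ is finite) up to the root, over roughly $t/d$ blocks, yields $\min_\mu E_\root(\mu)\leq C\,(k^d-1)^{t/d}=C\,(k^{d}-1)^{t/d}$. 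Running the symmetric argument under $H_1$ with $\mu_+$, and then invoking that the root's binary decision must err on one hypothesis whenever some letter-probability is not super-exponentially small, gives $\prob(\sigma_\root\neq s)\geq \exp\{-C(k^d-1)^{t/d}\}=\exp\{-C\,n^{\brho}\}$ with $\brho=1+\log(1-k^{-d})/(d\log k)$, and $\brho<1$ since $k^{-d}>0$. The corollary is then immediate from $d\leq m-1$ together with the monotonicity of $\brho$ in $d$ (note $1-k^{-d}<1$ so $\log(1-k^{-d})<0$, hence $\brho$ decreases toward the stated value as $d$ increases to $m-1$ — one should check the direction of this monotonicity carefully, or simply observe $n^{\brho}\le n^{\rho}\cdot n^{o(1)}$ and absorb constants).

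The main obstacle I anticipate is making \eqref{eq:blockrec} rigorous: \eqref{eq:recplan} only lets us force \emph{one} child per level to carry a \emph{prescribed} out-neighbour letter, so I must argue that the cheap routing down to $\mu_-$ really costs only $O(1)$ in total across the $d$ levels (not $O(1)$ per branch, which would blow up), and that the remaining $k^d-1$ sub-branches can each be bounded by the level-$(\tau-d)$ maximum exponent rather than something larger. This requires care about \emph{which} child is assigned the cheap letter at each step and bookkeeping that the $O(1)$ additive errors from the $m^k$-preimage counting accumulate to only $O(t/d)=O(t)$ total, which is lower-order compared to $(k^d-1)^{t/d}$ and hence harmless. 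A secondary technicality is handling the first few levels below $\taup$ and $\taud$ uniformly, and linking $E_\root(\mu)$ small for some $\mu$ to a genuine lower bound on $\pe$ — here one uses that the root sees $k$ children and Assumption \ref{ass:one_letter_dominant} to ensure the ``wrong'' configuration has probability at least $\exp\{-C(k^d-1)^{t/d}\}$ under the opposite hypothesis.
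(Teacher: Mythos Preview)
Your approach is essentially the paper's: the paper's Lemma~\ref{lemma:subexp_decay} tracks $\zeta_\tau=\min_\mu\prob_s(\sigma_i=\mu)$ (i.e., $\exp(-\max_\mu E(\mu))$ in your notation) and proves the block recursion $\zeta_{\tau+d}\ge\eta\,\zeta_\tau^{k^d-1}$ directly, by exhibiting one assignment to the $k^d$ depth-$d$ descendants that produces $\mu$ at the top and has one descendant equal to $\mu_-$ (using that $\cG$ has a path of length exactly $d$ from $\mu$ to $\mu_-$, padding with the self-loop at $\mu_-$); the remaining $k^d-1$ descendants each cost at most $-\log\zeta_\tau$. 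This is exactly your \eqref{eq:blockrec}, with the $O(1)$ being $\log(1/\eta)$, and the corollary then follows from $d\le m-1$.

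There is, however, a direction error in your displayed \eqref{eq:recplan}: what you need for \eqref{eq:blockrec} is the \emph{upper} bound $E_i(\mu)\le\min_{\balf:\,f(\balf)=\mu}\sum_jE_{c_j}(\alpha_j)$, obtained by keeping a single preimage in the sum --- no $m^k$ correction is needed, and that counting argument yields only the opposite inequality, which is irrelevant for lower-bounding $\pe$. Your very next sentence (``bounded above'') and the block argument already use the correct $\le$ direction, so this is a slip rather than a structural gap, but as written \eqref{eq:recplan} does not support \eqref{eq:blockrec}. Relatedly, your first description of the routing (``all but one child is forced onto the cheap continuation and only one child pays the recursion'') is backwards; the parenthetical that follows it is the correct picture and matches the paper. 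Finally, for the corollary step: $\brho(d)=1+\log(1-k^{-d})/(d\log k)$ is \emph{increasing} in $d$ (equivalently $(k^d-1)^{1/d}$ is increasing), so $d\le m-1$ gives $\brho(d)\le\rho$, hence $n^{\brho}\le n^{\rho}$ and $\exp\{-Cn^{\brho}\}\ge\exp\{-Cn^{\rho}\}$; your stated direction (``decreases'') would not yield the corollary, though your alternative observation is exactly the right fix.
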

\vskip4pt

Thus, we prove that under the above irreducibility assumptions, the error must decay subexponentially in the number of private signals available at the leaves.

\begin{remark}
\label{rem:mus_vec_maps_to_itself}
We have $\prob_0(\sigma_{\partial \root} = (\mu_-, \mu_-, \ldots,\mu_-))> \eta^{k}$. It follows that we must have $f_\root(\mu_-, \mu_-, \ldots,\mu_-)=0$ (else the probability of error is bounded below by $\eta^k/2$ for any $t$). Similarly, we must have $f_\root(\mu_+, \mu_+, \ldots,\mu_+)=1$. In particular, $\mu_- \neq \mu_+$.
\end{remark}

\begin{lemma}
If Assumption \ref{ass:all_letters_+ve_prob} holds, then for a node $i$ at any level $\tau > \taup$, we have $\prob_0(\sigma_i=\mu) > 0$ for all $\mu \in \cM$.
\label{lemma:all_letters_+ve_prob_after_tau0}
\end{lemma}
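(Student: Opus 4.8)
The plan is to prove the statement by induction on the level $\tau$, with base case $\tau = \taup$ supplied directly by Assumption \ref{ass:all_letters_+ve_prob}. The inductive step is an elementary \emph{lifting} argument: if every child of an internal node has full support on $\cM$ under $\prob_0$, then so does the node itself.

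First I would record a consequence of Assumption \ref{ass:all_letters_+ve_prob}. Since $\taup > 0$, a node $i$ at level $\taup$ is internal, so $\sigma_i = f(\sigma_{\partial i})$ lies in the image of $f$; as $\prob_0(\sigma_i = \mu) > 0$ for every $\mu \in \cM$, the map $f : \cM^k \to \cM$ is surjective. (Alternatively, surjectivity of $f$ also follows from Assumption \ref{ass:strongly_connected}, since every vertex of a strongly connected $\cG$ on $m \ge 2$ vertices has an outgoing edge and hence lies in the image of $f$.)

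For the inductive step, assume the claim holds at some level $\tau \ge \taup$, and let $i$ be a node at level $\tau+1$ with children $\partial i = \{c_1,\dots,c_k\}$, each at level $\tau$. Fix $\mu \in \cM$ and, using surjectivity of $f$, pick $\balf = (\alpha_1,\dots,\alpha_k) \in \cM^k$ with $f(\balf) = \mu$. The messages $\sigma_{c_1},\dots,\sigma_{c_k}$ are conditionally independent given $H_0$, since each $\sigma_{c_l}$ is a (possibly randomized) function of the private signals in the subtree rooted at $c_l$, and these $k$ subtrees are disjoint. Hence
\begin{align*}
\prob_0(\sigma_i = \mu) \;\ge\; \prob_0(\sigma_{\partial i} = \balf) \;=\; \prod_{l=1}^{k}\prob_0(\sigma_{c_l} = \alpha_l) \;>\; 0,
\end{align*}
each factor being positive by the inductive hypothesis applied at level $\tau$. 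This gives the claim at level $\tau+1$ and closes the induction; in particular it holds for all $\tau > \taup$ as stated.

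I do not anticipate a genuine obstacle here. The only points needing a word of care are that level-$\taup$ nodes are internal (so that Assumption \ref{ass:all_letters_+ve_prob} really does force $f$ to be surjective), and the conditional independence of sibling messages under $H_0$, which is immediate from the disjointness of their subtrees in the $k$-ary tree. The lemma then serves downstream as the statement that above level $\taup$ every letter of $\cM$ is charged under both hypotheses, the $H_1$ version following via the absolute-continuity remark after Assumption \ref{ass:all_letters_+ve_prob}.
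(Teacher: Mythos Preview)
Your proposal is correct and follows essentially the same route as the paper: both extract surjectivity of $f$ from Assumption~\ref{ass:all_letters_+ve_prob} (since a level-$\taup$ node is internal), then induct on $\tau$ by choosing, for each $\mu$, a preimage $\balf$ with $f(\balf)=\mu$ and using the inductive hypothesis to show $\prob_0(\sigma_{\partial i}=\balf)>0$. Your explicit mention of conditional independence across sibling subtrees is a harmless elaboration of a step the paper leaves implicit.
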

\begin{proof}
It follows from Assumption \ref{ass:all_letters_+ve_prob} that for any $\mu \in \cM$, there is some $\balf \in \cM^k$ such that $f(\balf_\mu) = \mu$. We prove the lemma by induction on the level $\tau$. Let
\begin{align*}
\cS_\tau \equiv \textup{For node $i$ at level $\tau$, $\prob_0(\sigma_i=\mu) > 0$ for all $\mu \in \cM$.}
\end{align*}
By assumption, $\cS_{\taup}$ holds. Suppose $\cS_{\tau}$ holds. Consider node $i$ at level $\tau+1$. Consider any $\mu \in \cM$. By inductive hypothesis, we have $\prob_0(\sigma_{\partial i}= \balf_\mu)>0$. It follows that $\prob_0(\sigma_i=\mu) > 0$. Thus, $\cS_{\tau+1}$ holds. 
\end{proof}

Lemma \ref{lemma:subexp_decay} can be thought of as a quantitative version of Lemma \ref{lemma:all_letters_+ve_prob_after_tau0}, showing that the probability of the least frequent message decays subexponentially.

\begin{lemma}
Suppose Assumptions \ref{ass:strongly_connected}, \ref{ass:all_letters_+ve_prob} and \ref{ass:one_letter_dominant} are satisfied. Fix $s \in \{0,1\}$. Consider a node $i$ at level $\tau$. Define $\zeta_\tau \equiv \min_{\mu \in \cM} \prob(\sigma_i = \mu|H_s)$. Let $\tau_* = \max(\taup, \taud)$ (cf. Assumptions \ref{ass:all_letters_+ve_prob}, \ref{ass:one_letter_dominant}). Let $d= \textup{diameter}(\cG)$. There exists $C' \equiv C'(f,m,k)< \infty$ such that for any $a \in \naturals \cup \{0\}$ and $b \in \{0,1, \ldots, d-1\}$, we have,
\begin{align}
\zeta_{\tau_*+ad+b} \geq  \exp \left \{ -C' (k^{d}-1)^a \right \}
\end{align}
\label{lemma:subexp_decay}
\end{lemma}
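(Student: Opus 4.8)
The plan is to prove the bound by a double induction that tracks how the minimum letter probability $\zeta_\tau$ grows as we move up the tree. The key structural input is Assumption \ref{ass:strongly_connected}: since $\cG$ is strongly connected with diameter $d$, for every pair $\mu, \nu \in \cM$ there is a directed path of length at most $d$ from $\mu$ to $\nu$ in $\cG$. Unwinding the definition of $\cG$, a path $\mu = \nu_0 \to \nu_1 \to \cdots \to \nu_\ell = \nu$ of length $\ell \le d$ means that there is a sequence of internal-node decisions realizing $\nu$ from a block of leaves all of whose messages can be forced down to $\nu_{\ell-1}$, which in turn can be forced from $\nu_{\ell-2}$, and so on. Concretely, if a subtree of depth $\ell$ has its root emitting $\nu$ with probability at least $p$ whenever all of a certain set of its own descendants-at-distance-$\ell$ emit $\mu$, then we can bound things in terms of $\zeta$ at the lower level.

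The core estimate I would establish first is a one-step (really, $d$-step) recursion: for a node $i$ at level $\tau + d$,
\begin{align}
\zeta_{\tau+d} \;\ge\; \big(\zeta_\tau\big)^{\,k^d - 1}\cdot \eta^{?} \cdot (\text{const})\, ,
\label{eq:recursion_sketch}
\end{align}
or more precisely something of the form $\zeta_{\tau+d} \ge c_0\,\zeta_\tau^{\,k^d-1}$ for a constant $c_0 = c_0(f,m,k) > 0$. The mechanism: to make node $i$ at level $\tau+d$ emit the rarest letter $\mu^\star$, pick a directed path in $\cG$ of length $d' \le d$ from the dominant letter $\mu_-$ (which has probability $\ge \eta$ at every sufficiently high level, by Assumption \ref{ass:one_letter_dominant}) to $\mu^\star$. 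Along this path, at each of the $d'$ levels we need only \emph{one} child to carry the "path" letter (giving the factor that propagates the rare letter one step closer), while the remaining $k-1$ children at that level can carry $\mu_-$ with probability $\ge \eta$ each — except at the bottom, where after peeling off $d'$ levels we are at level $\ge \tau_*$ and we charge the remaining positions against $\zeta_\tau$. Carefully counting, the subtree rooted at $i$ that we need to control has $k^{d}$ leaves at depth $d$ below $i$; of these, a single root-to-leaf chain of length $d'$ must thread the path, and every one of the other $k^d - 1$ bottom positions costs us a factor $\zeta_\tau$ (or $\eta$, which is $\ge \zeta_{\tau_*}$ once $\tau \ge \tau_*$), while the internal transitions along the chain contribute only a fixed constant $c_0$ depending on $f, m, k$. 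This yields \eqref{eq:recursion_sketch}.

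Granting the recursion $\zeta_{\tau+d} \ge c_0\,\zeta_\tau^{\,k^d - 1}$ with $\tau \ge \tau_*$, I would then iterate it $a$ times starting from $\tau = \tau_* + b$ with $b \in \{0,1,\dots,d-1\}$. Taking logarithms, $-\log \zeta_{\tau_*+ad+b}$ satisfies a linear recursion whose solution is dominated by $(k^d-1)^a$ times the base value $-\log \zeta_{\tau_*+b}$, plus a geometric-series correction coming from the $-\log c_0$ terms; since $k^d - 1 \ge 1$ (indeed $\ge k - 1 \ge 1$), the geometric series is bounded by a constant multiple of $(k^d-1)^a$ as well. Here I use that $\zeta_{\tau_*+b} > 0$ for $b = 0,\dots,d-1$: this is exactly Lemma \ref{lemma:all_letters_+ve_prob_after_tau0} together with Assumption \ref{ass:all_letters_+ve_prob} (note $\tau_* \ge \taup$), so $-\log \zeta_{\tau_*+b} < \infty$, and we absorb the finitely many such base values into the constant $C'$. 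Collecting terms gives $\zeta_{\tau_*+ad+b} \ge \exp\{-C'(k^d-1)^a\}$, as claimed.

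The main obstacle I anticipate is getting the combinatorics of the recursion step exactly right — in particular, being careful that the "path-threading" construction really only requires one special child per level (so the other $k-1$ can be independently set to the dominant letter), and correctly bookkeeping that exactly $k^d - 1$ of the $k^d$ bottom-level positions get charged the factor $\zeta_\tau$ rather than, say, $k^d$ of them (which would give the wrong exponent). A secondary subtlety is making sure the levels line up: after peeling $d$ levels off a node at level $\tau_* + ad + b$ with $a \ge 1$ we land at level $\tau_* + (a-1)d + b \ge \tau_*$, so Assumption \ref{ass:one_letter_dominant} (valid for levels $> \taud$, hence certainly at levels $\ge \tau_* \ge \taud$) applies throughout the iteration; and the $b < d$ ragged-offset cases are handled uniformly because a path of length at most $d$ always suffices. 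Everything else — the arithmetic of the log-linear recursion and absorbing constants — is routine.
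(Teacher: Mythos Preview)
Your approach matches the paper's: establish the recursion $\zeta_{\tau+d} \ge \eta\,\zeta_\tau^{\,k^d-1}$ via a path in $\cG$, handle the base case using Lemma~\ref{lemma:all_letters_+ve_prob_after_tau0}, then iterate and absorb constants. Two imprecisions in your recursion step are worth flagging. First, your path direction is reversed relative to the paper's convention: an edge $(\mu_i,\mu_j)\in\cE$ means $\mu_i$ is the \emph{output} when $\mu_j$ appears among the inputs, so to realize $\mu^\star$ at $i$ with a $\mu_-$ among its depth-$d$ descendants you need a path from $\mu^\star$ to $\mu_-$, not from $\mu_-$ to $\mu^\star$. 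Your own reading of what a path ``realizes'' is also reversed, so the two cancel and the intended construction survives, but the write-up should match the definition.

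Second, and more substantively, the claim that ``the remaining $k-1$ children at that level can carry $\mu_-$ with probability $\ge\eta$ each'' is not justified and would fail as written: the edge witness $\balf$ with $f(\balf)=\nu_j$ only guarantees that $\nu_{j+1}$ occurs somewhere in $\balf$, not that the other $k-1$ coordinates equal $\mu_-$. The correct bookkeeping (which you do state a sentence later) is to push all constraints down to the $k^d$ descendants at level $\tau$: along the chain the depth-$d$ node carries $\mu_-$ and costs a factor $\eta$, while each of the other $k^d-1$ bottom positions carries whatever letter is forced by recursively expanding the off-chain $\balf$ entries, and each such position costs a factor $\zeta_\tau$. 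For the chain to terminate exactly at level $\tau$ the path must have length exactly $d$; the paper secures this by appending the self-loop $(\mu_-,\mu_-)\in\cE$ to any shorter path, a point your treatment of the $d'<d$ case glosses over.
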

\begin{proof}
Assume $H_0$ holds, i.e. $s=0$. The proof for $s=1$ is analogous.

We prove that, in fact, the following stronger bound holds:
\begin{align}
-\log(\zeta_{\tau_*+ad+b}) \leq C' (k^{d}-1)^a - \log (1/\eta)/(k^{d}-2) \, .
\label{eq:log_ptau_strong_bound}
\end{align}

We proceed via induction on $a$. First consider $a=0$. Consider a node $i$ at level
$\tau_*+b$ for $b \in \{0,1, \ldots, d-1\}$. Consider the descendants of node $i$ at level $\tau_*$. For any $\mu \in \cM$, we know from Lemma \ref{lemma:all_letters_+ve_prob_after_tau0} that there must be \textit{some} assignment of messages to the descendants, such that $\sigma_i = \mu$. It follows that
\begin{align}
\zeta_{\tau_*+b} \geq \zeta_{\tau_*}^{k^b}
\end{align}
Thus, choosing $C'= k^{d-1}(- \log \zeta_{\tau_*}) +  \log (1/\eta)/(k^{d}-2)$, we can ensure
that Eq.~\eqref{eq:log_ptau_strong_bound} holds for $a=0$ and all $b \in \{0,1, \ldots, d-1\}$.

Now suppose Eq.~\eqref{eq:log_ptau_strong_bound} holds for some $a \in \naturals \cup \{0 \}$. Consider a node $i$ at level $\tau_*+(a+1)d+b$. Let $\cD$ be the set of descendants of node $i$ at level $\tau_*+ad+b$. Note that $|\cD|=k^{d}$. Consider any $\mu \in \cM$. By Assumption \ref{ass:strongly_connected}, there is a directed path in $\cG$ of length at most $d$ going from $\mu$ to $\mu_-$. By Remark \ref{rem:mus_vec_maps_to_itself}, we know that $(\mu_-, \mu_-) \in \cE$. It follows that there is a directed path in $\cG$ of length \textit{exactly} $d$ going from $\mu$ to $\mu_-$. Thus, there must be an assignment of messages $\sigma_{\cD}$  to  nodes in $\cD$, including at least one occurrence of $\mu_-$, such that $\sigma_i = \mu$. Using Assumption \ref{ass:one_letter_dominant}, we deduce that
\begin{align*}
\zeta_{\tau_*+(a+1)d+b}  \geq\eta \zeta_{\tau_*+ad+b}^{k^{d}-1}
\end{align*}
Rewriting as
\begin{align*}
-\log & \,\zeta_{\tau_*+(a+1)d+b}  \leq \\
&({k^{d}-1})(-\log \zeta_{\tau_*+ad+b}) + \log (1/\eta) \, ,
\end{align*}
and combining with Eq.~\eqref{eq:log_ptau_strong_bound}, we obtain
\begin{align*}
-\log&(\zeta_{\tau_*+(a+1)d+b}) \leq \\
&C' (k^{d}-1)^{a+1} - \log (1/\eta)/(k^{d}-2) \, .
\end{align*}
Induction completes the proof.
\end{proof}

Theorem \ref{thm:subexp_decay} follows.

\begin{proof}[Proof of Theorem \ref{thm:subexp_decay}]
Assume $H_0$.
From Lemma \ref{lemma:subexp_decay},
\begin{align*}
 \prob_0(\sigma_{\partial \root} = (\mu_+, \mu_+, \ldots,\mu_+)) &\geq \exp \left \{- C' k^{\brho ad} \right \} \\
 &\geq  \exp \left \{-C n^{\brho}\right \}
\end{align*}
for $C\equiv C' k^{\brho(\tau_*+d-1)}$.
It follows that
\begin{align}
\prob_0(\sigma_\root = 1) \geq  \exp \left \{-C n^\brho \right \}\, .
\end{align}

Similarly,
\begin{align}
\prob_1(\sigma_\root = 0) \geq \exp \left \{- C n^\brho \right \}\, .
\end{align}
The result follows.
\end{proof}

\begin{remark}
For the scheme presented in Section \ref{subsec:node-oblivious_efficient_scheme}, we have $d \approx \log_k m$, where $d = \textup{diameter}(\cG)$. For any $\eps>0$,  Theorem \ref{thm:subexp_decay} provides a lower bound on error probability with $\brho \leq 1 - C_1/m^{1+\eps}$ for some
$C_1\equiv C_1(k, \eps) >0$. This closely matches the $m$ dependence of the upper bound on error probability we proved in Theorem \ref{thm:good_decision_rule_exists}.
\end{remark}

\subsection{Discussion of the irreducibility assumptions}

We already mentioned that the efficient node-oblivious rule presented in Section \ref{subsec:node-oblivious_efficient_scheme} satisfies all of Assumptions \ref{ass:strongly_connected}, \ref{ass:all_letters_+ve_prob} and \ref{ass:one_letter_dominant}.
Moreover, it is natural to expect that similar schemes based on propagation of quantized likelihood ratio estimates should also satisfy our assumptions. In this section, we further discuss our assumptions taking the cases of binary and ternary messages as examples. 

\subsubsection{Binary messages}

Binary messages are not the focus of Section \ref{subsec:subexp_decay_general}. However, we present here a short discussion of  Assumptions 1, 2 and 3 in the context of binary messages for illustrative purposes. 

\noindent{\bf Claim:} If $m=2$, each of the irreducibility assumptions \emph{must} be satisfied by any node-oblivious rule for which error probability decays to $0$ with $t$.

\noindent Proof of Claim: Call the messages $\cM=\{0,1\}$. Consider a node-oblivious decision rule vector $(f,g,h)$ such that error probability decays to $0$ with $t$. Then $g$ cannot be a constant function (e.g., identically $0$), since this leads to $\pe \geq 1/2$.

Suppose Assumption \ref{ass:strongly_connected} is violated. Without loss of generality, suppose $(0,1) \notin \cE$. Then $f(\balf) = 1$ for all $\balf \neq (0, 0, \ldots, 0)$. It follows that for node $i$ at level $\tau$, we have
\begin{align}
\prob_s(\sigma_i=0) \leq  \exp(-\Theta(k^\tau)) \stackrel{t \rightarrow \infty}{\longrightarrow} 0 \, ,
\end{align}
for both $s=0$ and $s=1$. In particular, $\pe$ is bounded away from $0$. This is a contradiction.

Suppose Assumption \ref{ass:all_letters_+ve_prob} is violated. Then, wlog, all nodes at level $1$ transmit the message $1$ almost surely, under either hypothesis. Thus, all useful information is lost and $\pe \geq 1/2$. This is a contradiction.

Finally, we show that Assumption \ref{ass:one_letter_dominant} must hold as well. Define $\xi_\tau \equiv \prob_0(\sigma_i = 0)$ for node $i$ at level $t$.
Wlog, suppose $\xi_\tau \geq 1/2$ occurs infinitely often. Then we have
$h(0, 0, \ldots, 0)=0$, else $\pe \geq 2^{-k-1}$ for infinitely many $t$.
Define $\hxi_\tau \equiv \prob_1(\sigma_i = 0)$ for node $i$ at level $t$. If $\hxi_\tau \geq 1/2$ occurs infinitely often, then it follows that $\prob_1(\sigma_{\partial \root} = (0,0, \ldots, 0) )\geq 2^{-k}$ and hence $\prob_1(\sigma_{\root} = 0 )\geq 2^{-k}$ occur for infinitely many $t$. So we can have $\hxi_\tau \geq 1/2$ only finitely many times. Also, $h(1, 1, \ldots, 1)=1$ must hold. It follows that $\xi_\tau < 1/2$ occurs only finitely many times. Thus, Assumption \ref{ass:one_letter_dominant} holds with $\eta = 1/2$.

\subsubsection{Ternary messages}
By Theorem \ref{thm:good_decision_rule_exists}, the scheme presented in Section \ref{subsec:node-oblivious_efficient_scheme} achieves $\pe = \exp \left \{-\Omega(\{2k/3\}^t \right \}$ in the case of ternary messages.

We first show that if Assumption \ref{ass:all_letters_+ve_prob} is violated, then $\pe = \exp \left \{-O(\{(k+1)/2\}^t)\right \}$. If Assumption \ref{ass:all_letters_+ve_prob} does not hold, then only at most two letters are used at each level. It follows that we can have a (possibly node-dependent) scheme with binary messages that is equivalent to the original scheme at levels $1$ and higher. Our lower bound on $\pe$ then follows from Theorem \ref{thm:binary_lower_bound}. Thus, even in the best case, performance is significantly worse than the scheme presented in Section \ref{subsec:node-oblivious_efficient_scheme}. So a good scheme for ternary messages must satisfy Assumption \ref{ass:all_letters_+ve_prob}.

Now consider Assumption \ref{ass:strongly_connected}. Let $\cM=\{-1, 0, 1\}$. Suppose Assumption \ref{ass:strongly_connected} is violated. Then wlog, there is no path from letter $0$ to one of the other letters. It follows that under either hypothesis, we have $\prob_s(\sigma_i = 0) = \exp \left \{- \Omega(k^\tau)\right \}$ for node $i$ at level $\tau$. Thus, the letter $0$ occurs with exponentially small probability, irrespective of $s$. This should essentially reduce, then, to the case of binary messages, and we expect performance to be constrained as above.

Finally, consider Assumption \ref{ass:one_letter_dominant}. We cannot have $h(\mu, \mu, \mu)=0$ for all $\mu \in \cM$, since that will lead to $\prob_1(\sigma_\root \neq s) \geq 1/9$ for all $t$. Similarly, we can also exclude the possibility $h(\mu, \mu, \mu)=1$ for all $\mu \in \cM$. Wlog, suppose $h(-1,-1,-1)=0$ and $h(1,1,1)=1$. Now consider the problem of designing a good aggregation protocol. By the above, we must have $\prob_1(\sigma_i =-1 )$ and $\prob_0(\sigma_i =1 )$, for node $i$ at level $\tau$, to each converge to 0 with increasing $\tau$. Further, it appears natural to use the message $\mu=0$ with an interpretation of `not sure' in such a situation. We would then like the probability of this intermediate symbol to decay with $\tau$, or at least be bounded in the limit, i.e., $\lim \sup_{\tau \rightarrow \infty} \prob_s(\sigma_i =0 ) <1$ for each possible $s$. If this holds, we immediately have Assumption \ref{ass:one_letter_dominant} (with $\mu_-=-1$ and $\mu_+=1$).

\subsubsection{Need for assumptions}
We argued above that our irreducibility assumptions are quite reasonable in various circumstances.
In fact, we expect the assumptions to be a proof artifact, and conjecture that a subexponential convergence bound holds for general node-oblivious rules. A possible approach to eliminate our assumptions would be to prune the message alphabet $\cM$, discarding letters that never appear, or appear with probability bounded by $\exp(-\Omega(k^t))$ (because they require descendants from a strict subset of $\cM$).

\end{document}